\font\twlgot =eufm10 scaled \magstep1 \font\egtgot =eufm8
\font\sevgot =eufm7
\font\twlmsb =msbm10 scaled \magstep1 \font\egtmsb =msbm8
\font\sevmsb =msbm7
\def\pgot{\fam\gotfam\twlgot}
\def\got{\protect\pgot}
\def\Bbb{\protect\pBbb}
\def\pBbb{\relax\ifmmode\expandafter\Bb\else\typeout{You cann't use
Bbb in text mode}\fi}
\def\Bb #1{{\fam\msbfam\relax#1}}
\def\op#1{\mathop{{\it\fam0} #1}\limits}
\newcommand{\id}{{\rm Id\,}}
\newcommand{\pr}{{\rm pr}}
\newcommand{\di}{{\rm dim\,}}
\newcommand{\Id}{{\rm Id}}
\newcommand{\hm}{{\rm Hom\,}}
\newcommand{\dif}{{\rm Diff\,}}
\newcommand{\Is}{{\rm Aut\,}}
\newcommand{\nm}[1]{|{#1}|}
\newcommand{\bll}{\bullet}
\newcommand{\beq}{\begin{equation}}
\newcommand{\eeq}{\end{equation}}
\newcommand{\ben}{\begin{eqnarray}}
\newcommand{\een}{\end{eqnarray}}
\newcommand{\be}{\begin{eqnarray*}}
\newcommand{\ee}{\end{eqnarray*}}
\newcommand{\nw}[1]{[{#1}]}
\newcommand{\cB}{{\cal B}}
\newcommand{\gO}{{\got O}}
\newcommand{\gA}{{\got A}}
\newcommand{\cG}{{\got g}}
\newcommand{\gd}{{\got d}}
\newcommand{\gS}{{\got S}}
\newcommand{\gR}{{\got R}}
\newcommand{\cJ}{{\cal J}}
\newcommand{\cA}{{\cal A}}
\newcommand{\cO}{{\cal O}}
\newcommand{\cR}{{\cal R}}
\newcommand{\cV}{{\cal V}}
\newcommand{\cH}{{\cal H}}
\newcommand{\cC}{{\cal C}}
\newcommand{\cK}{{\cal K}}
\newcommand{\cM}{{\cal M}}
\newcommand{\ccG}{{\cal G}}
\newcommand{\cS}{{\cal S}}
\newcommand{\bL}{{\bf L}}
\newcommand{\bb}{{\bf 1}}
\newcommand{\al}{\alpha}
\newcommand{\dl}{\delta}
\newcommand{\la}{\lambda}
\newcommand{\La}{\Lambda}
\newcommand{\f}{\phi}
\newcommand{\vf}{\varphi}
\newcommand{\om}{\omega}
\newcommand{\g}{\gamma}
\newcommand{\G}{\Gamma}
\newcommand{\e}{\epsilon}
\newcommand{\ve}{\varepsilon}
\newcommand{\up}{\upsilon}
\newcommand{\vt}{\vartheta}
\newcommand{\si}{\sigma}
\newcommand{\Si}{\Sigma}
\newcommand{\w}{\wedge}
\newcommand{\wt}{\widetilde}
\newcommand{\wh}{\widehat}
\newcommand{\ol}{\overline}
\newcommand{\dr}{\partial}
\newcommand{\mar}[1]{}
\newcommand{\ar}{\op\longrightarrow}
\newcommand{\ot}{\otimes}
\let\ssection=\section
\renewcommand{\section}{\setcounter{equation}{0}\ssection}
\newcounter{eqalph}[section]
\newcounter{equationa}[section]
\newcounter{example}[section]
\newcounter{remark}[section]
\newcounter{theorem}[section]
\newcounter{proposition}[section]
\newcounter{lemma}[section]
\newcounter{corollary}[section]
\newcounter{definition}[section]
\def\theremark{\arabic{section}.\arabic{remark}}
\def\thedefinition{\arabic{section}.\arabic{definition}}
\newenvironment{proof}{\noindent {\bf Proof.}}{\hfill{\footnotesize\bf
QED} \bigskip }
\newenvironment{example}{\refstepcounter{remark} {\bf Example
\theremark.}}{{\Large $\bullet$}  }
\newenvironment{remark}{\refstepcounter{remark} {\bf Remark
\theremark.}}{{\Large $\bullet$}  }
\newenvironment{theorem}{\refstepcounter{definition} {\sc
Theorem \thedefinition}.}{$\Box$ }
\newenvironment{prop}{\refstepcounter{definition} {\sc
Proposition \thedefinition}.}{$\Box$  }
\newenvironment{definition}{\refstepcounter{definition} {\sc
Definition \thedefinition}.}{$\Box$ }
\begin{document}

\hbox{}

\begin{center}

{\large \bf Lectures on supergeometry}

\bigskip
\bigskip

{\sc G. Sardanashvily}
\bigskip

Department of Theoretical Physics, Moscow State University,
Moscow, Russia

\bigskip
\bigskip

{\bf Abstract}
\end{center}

\noindent Elements of supergeometry are an ingredient in many
contemporary classical and quantum field models involving odd
fields. For instance, this is the case of SUSY field theory, BRST
theory, supergravity. Addressing to theoreticians, these Lectures
aim to summarize the relevant material on supergeometry of modules
over graded commutative rings, graded manifolds and
supermanifolds.


\bigskip
\bigskip

\centerline{\bf Contents}
\bigskip

\noindent {\it 1. Graded tensor calculus} - {\bf 2}, {\it 2.
Graded differential calculus and connections} - {\bf 6}, {\it 3.
Geometry of graded manifolds} - {\bf 11}, {\it 4. Superfunctions}
- {\bf 18}, {\it 5. Supermanifolds} - {\bf 22}, {\it 6. DeWitt
supermanifolds} - {\bf 25}, {\it 7. Supervector bundles} - {\bf
26}, {\it 8. Superconnections} - {\bf 29}, {\it 9. Principal
superconnections} - {\bf 31}, {\it 10. Supermetric} - {\bf 36},
{\it 11. Graded principal bundles} - {\bf 40}.
\bigskip
\bigskip

Supergeometry is phrased in terms of $\Bbb Z_2$-graded modules and
sheaves over $\Bbb Z_2$-graded commutative algebras. Their
algebraic properties naturally generalize those of modules and
sheaves over commutative algebras, but supergeometry is not a
particular case of non-commutative geometry because of a different
definition of graded derivations. In these Lectures, we address
supergeometry of modules over graded commutative rings (Lecture
2), graded manifolds (Lectures 3 and 11) and supermanifolds.

It should be emphasized from the beginning that graded manifolds
are not supermanifolds, though every graded manifold determines a
DeWitt $H^\infty$-supermanifold, and {\it vice versa} (see Theorem
\ref{+44} below). Both graded manifolds and supermanifolds are
phrased in terms of sheaves of graded commutative algebras.
However, graded manifolds are characterized by sheaves on smooth
manifolds, while supermanifolds are constructed by gluing of
sheaves of supervector spaces. Note that there are different types
of supermanifolds; these are $H^\infty$-, $G^\infty$-,
$GH^\infty$-, $G$-, and DeWitt supermanifolds. For instance,
supervector bundles are defined in the category of
$G$-supermanifolds.

\section{Graded tensor calculus}

Unless otherwise stated, by a graded structure throughout the
Lectures is meant a $\Bbb Z_2$-graded structure, and the symbol
$[.]$ stands for the $\Bbb Z_2$-graded parity.

Let us recall some basic notions of the graded tensor calculus
\cite{bart,cia}.

An algebra $\cA$ is called graded if it is endowed with a grading
automorphism $\g$ such that $\g^2=\id$. A graded algebra seen as a
$\Bbb Z$-module falls into the direct sum $\cA=\cA_0\oplus \cA_1$
of two $\Bbb Z$-modules $\cA_0$ and $\cA_1$ of even and odd
elements such that
\be
\g(a)=(-1)^ia, \qquad a\in\cA_i, \qquad i=0,1.
\ee
One calls $\cA_0$ and $\cA_1$ the even and odd parts of $\cA$,
respectively. In particular, if $\g=\id$, then $\cA=\cA_0$. Since
\be
\g(aa')=\g(a)\g(a'),
\ee
we have
\be
[aa']=([a]+[a']){\rm mod}\,2
\ee
where $a\in \cA_{[a]}$, $a'\in \cA_{[a']}$. It follows that
$\cA_0$ is a subalgebra of $\cA$ and $\cA_1$ is an $\cA_0$-module.
If $\cA$ is a graded ring, then $[\bb]=0$.

A graded algebra $\cA$ is said to be graded commutative if
\be
aa'=(-1)^{[a][a']}a'a,
\ee
where $a$ and $a'$ are arbitrary homogeneous elements of $\cA$,
i.e., they are either even or odd.

Given a graded algebra $\cA$, a left graded $\cA$-module $Q$ is a
left $\cA$-module provided with the grading automorphism $\g$ such
that
\be
\g(aq)=\g(a)\g(q), \qquad a\in\cA,\qquad q\in Q,
\ee
i.e.,
\be
[aq]=([a]+[q]){\rm mod}\,2.
\ee
A graded module $Q$ is split into the direct sum $Q=Q_0\oplus Q_1$
of two $\cA_0$-modules $Q_0$ and $Q_1$ of even and odd elements.
Similarly, right graded modules are defined.

If $\cK$ is a graded commutative ring, a graded $\cK$-module can
be provided with a graded $\cK$-bimodule structure by letting
\be
qa= (-1)^{[a][q]}aq, \qquad a\in\cK, \qquad q\in Q.
\ee
A graded $\cK$-module is called free if it has a basis generated
by homogeneous elements. This basis is said to be of type $(n,m)$
if it contains $n$ even and $m$ odd elements.

In particular, by a (real) graded vector space $B=B_0\oplus B_1$
is meant a graded $\Bbb R$-module. A graded vector space is said
to be $(n,m)$-dimensional if $B_0=\Bbb R^n$ and $B_1=\Bbb R^m$.

The following are standard constructions of new graded modules
from old ones.

$\bullet$ The direct sum of graded modules over the same graded
commutative ring and a graded factor module are defined just as
those of modules over a commutative ring.

$\bullet$ The tensor product $P\ot Q$ of graded $\cK$-modules $P$
and $Q$ is an additive group generated by elements $p\ot q$, $p\in
P$, $q\in Q$, obeying the relations
\be
&& (p+p')\ot q =p\ot q + p'\ot q, \\
&& p\ot(q+q')=p\ot q+p\ot q', \\
&&  ap\ot q=(-1)^{[p][a]}pa\ot q= (-1)^{[p][a]}p\ot aq=\\
&& \qquad (-1)^{([p]+[q])[a]}p\ot qa, \qquad a\in\cK.
\ee
In particular, the tensor algebra
\be
\ot Q=\cK\oplus Q\oplus\cdots \oplus \op\ot^k Q\oplus \cdots
\ee
of a graded $\cK$-module $Q$ is defined as that of a module over a
commutative ring. Its quotient $\w Q$ with respect to the ideal
generated by elements
\be
q\ot q' + (-1)^{[q][q']}q'\ot q, \qquad q,q'\in Q,
\ee
is the bigraded exterior algebra of a graded module $Q$ with
respect to the graded exterior product
\be
q\w q' =- (-1)^{[q][q']}q'\w q.
\ee

$\bullet$ A morphism $\Phi:P\to Q$ of graded $\cK$-modules is said
to be even (resp. odd) if $\Phi$ preserves (resp. change) the
graded parity of all elements $P$. It obeys the relations
\be
\Phi(ap)=(-1)^{[\Phi][a]}\Phi(p), \qquad p\in P, \qquad a\in\cK.
\ee
The set $\hm_\cK(P,Q)$ of graded morphisms of a graded
$\cK$-module $P$ to a graded $\cK$-module $Q$ is naturally a
graded $\cK$-module. The graded $\cK$-module $P^*=\hm_\cK(P,\cK)$
is called the dual of a graded $\cK$-module $P$.

A graded commutative $\cK$-ring $A$ is a graded commutative ring
which is also a graded $\cK$-module. A graded commutative $\Bbb
R$-ring is said to be of rank $N$ if it is a free algebra
generated by the unit $\bb$ and $N$ odd elements. A graded
commutative Banach ring $A$ is a graded commutative $\Bbb R$-ring
which is a real Banach algebra whose norm obeys the additional
condition
\be
\|a_0 + a_1\|=\|a_0\| + \|a_1\|, \qquad a_0\in A_0, \quad a_1\in
A_1.
\ee

Let $V$ be a real vector space. Let $\La=\w V$ be its ($\Bbb
N$-graded) exterior algebra provided  with the $\Bbb Z_2$-graded
structure
\mar{+66}\beq
\La=\La_0\oplus \La_1, \qquad \La_0=\Bbb R\op\bigoplus_{k=1}
\op\w^{2k} V, \qquad \La_1=\op\bigoplus_{k=1} \op\w^{2k-1} V.
\label{+66}
\eeq
It is a graded commutative $\Bbb R$-ring, called the Grassmann
algebra. A Grassmann algebra, seen as an additive group, admits
the decomposition
\mar{+11}\beq
\La=\Bbb R\oplus R =\Bbb R\oplus R_0\oplus R_1=\Bbb R \oplus
(\La_1)^2 \oplus \La_1, \label{+11}
\eeq
where $R$ is the ideal of nilpotents of $\La$. The corresponding
projections $\si:\La\to\Bbb R$ and $s:\La\to R$ are called the
body and soul maps, respectively.

\begin{remark} \label{+52} \mar{+52}
Let us note that there is a different definition of a Grassmann
algebra \cite{jad} which is equivalent to the above one only in
the case of an infinite-dimensional vector space $V$ \cite{cia}.
Let us mention the Arens--Michael algebras of Grassmann origin
\cite{bruz99} which are most general graded commutative algebras,
suitable for superanalysis (see Remark \ref{+51} below).
\end{remark}

Hereafter, we restrict our consideration to Grassmann algebras of
finite rank. Given a basis $\{c^i\}$ for the vector space $V$, the
elements of the Grassmann algebra $\La$ (\ref{+66}) take the form
\mar{z784}\beq
a=\op\sum_{k=0} \op\sum_{(i_1\cdots i_k)}a_{i_1\cdots
i_k}c^{i_1}\cdots c^{i_k}, \label{z784}
\eeq
where the second sum runs through all the tuples $(i_1\cdots i_k)$
such that no two of them are permutations of each other. The
Grassmann algebra $\La$ becomes a graded commutative Banach ring
if its elements (\ref{z784}) are endowed with the norm
\be
\|a\|=\op\sum_{k=0} \op\sum_{(i_1\cdots i_k)}\nm{a_{i_1\cdots
i_k}}.
\ee

Let $B$ be a graded vector space. Given a Grassmann algebra $\La$
of rank $N$, it can be brought into a graded $\La$-module
\be
\La B=(\La B)_0\oplus (\La B)_1=(\La_0\ot B_0\oplus \La_1\ot
B_1)\oplus (\La_1\ot B_0\oplus \La_0\ot B_1),
\ee
called a superspace. The superspace
\mar{+70}\beq
B^{n\mid m}=[(\op\oplus^n\La_0) \oplus (\op\oplus^m\La_1)]\oplus
[(\op\oplus^n\La_1)\oplus (\op\oplus^m\La_0)] \label{+70}
\eeq
is said to be $(n,m)$-dimensional. The graded $\La_0$-module
\be
B^{n,m}= (\op\oplus^n\La_0) \oplus (\op\oplus^m\La_1)
\ee
is called an $(n,m)$-dimensional supervector space.

Whenever referring to a topology on a supervector space
$B^{n,m}$, we will mean the Euclidean topology on a
$2^{N-1}[n+m]$-dimensional real vector space.

Given a superspace $B^{n\mid m}$  over a Grassmann algebra $\La$,
a $\La$-module endomorphism of $B^{n\mid m}$ is represented by an
$(n+ m)\times (n+m)$ matrix
\mar{+200}\beq
L=\left(
\begin{array}{cc}
L_1 & L_2 \\
L_3 & L_4
\end{array}
\right) \label{+200}
\eeq
with entries in $\La$. It is called a supermatrix. One says that a
supermatrix $L$ is

$\bullet$ even if $L_1$ and $L_4$ have even entries, while $L_2$
and $L_3$ have the odd ones;

$\bullet$ odd if $L_1$ and $L_4$ have odd entries, while $L_2$ and
$L_3$ have the even ones.

Endowed with this gradation, the set of supermatrices (\ref{+200})
is a graded $\La$-ring. Unless otherwise stated, by supermatrices
are meant homogeneous ones.

The familiar notion of a trace is extended to supermatrices
(\ref{+200}) as the supertrace
\be
{\rm Str}\,L ={\rm Tr}\,L_1 -(-1)^{[L]}{\rm Tr}\,L_4.
\ee
For instance, Str$(\,\bb)=n-m$.

A supertransposition $L^{st}$ of a supermatrix $L$ is the
supermatrix
\be
L^{st}=\left(
\begin{array}{cc}
L_1^t & (-1)^{[L]}L_3^t \\
-(-1)^{[L]}L_2^t & L_4^t
\end{array}
\right),
\ee
where $L^t$ denotes the ordinary matrix transposition. There are
the relations
\mar{+262',2}\ben
&& {\rm Str}(L^{st})={\rm Str}\,L,\nonumber\\
&& (LL')^{st}=(-1)^{[L][L']}L'^{st}L^{st}, \label{+262'}\\
&&{\rm Str}(LL')=(-1)^{[L][L']}{\rm Str}(L'L) \quad {\rm or} \quad {\rm
Str}([L,L'])=0.\label{+262}
\een

In order to extend the notion of a determinant to supermatrices,
let us consider invertible supermatrices $L$ (\ref{+200}). They
are never odd. One can show that an even supermatrix $L$ is
invertible  if and only if either the matrices $L_1$ and $L_4$ are
invertible or the real matrix $\si(L)$ is invertible, where $\si$
is the body morphism.

Invertible supermatrices constitute a group $GL(n|m;\La)$, called
the general linear graded group. Then a superdeterminant of $L\in
GL(n|m;\La)$ is defined as
\be
{\rm Sdet}\,L={\rm det}(L_1 -L_2L_4^{-1}L_3)({\rm det}\,L^{-1}_4).
\ee
It satisfies the relations
\be
&& {\rm Sdet}(LL')=({\rm Sdet}\,L)({\rm Sdet}\,L'), \\
&& {\rm Sdet}(L^{st})= {\rm Sdet}\,L,\\
 && {\rm Sdet}(\exp(L))=\exp({\rm Sdet}\,(L)).
\ee

Let $\cK$ be a graded commutative ring. A graded commutative
(non-associative) $\cK$-algebra $\cG$ is called a Lie
$\cK$-superalgebra if its product, called the superbracket
 and denoted by $[.,.]$, obeys the relations
\be
&& [\ve,\ve']=-(-1)^{[\ve][\ve']}[\ve',\ve],\\
&& (-1)^{[\ve][\ve'']}[\ve,[\ve',\ve'']]
+(-1)^{[\ve'][\ve]}[\ve',[\ve'',\ve]] +
(-1)^{[\ve''][\ve']}[\ve'',[\ve,\ve']] =0.
\ee
Obviously, the even part $\cG_0$ of a Lie $\cK$-superalgebra $\cG$
is a Lie $\cK_0$-algebra. A graded $\cK$-module $P$ is called a
$\cG$-module if it is provided with a $\cK$-bilinear map
\be
&& \cG\times P\ni (\ve,p)\mapsto \ve p\in P, \\
&& [\ve
p]=([\ve]+[p]){\rm mod}\,2,\\
&& [\ve,\ve']p=(\ve\circ\ve'-(-1)^{[\ve][\ve']}\ve'\circ\ve)p.
\ee

\section{Graded differential calculus and connections}

Linear differential operators and connections on graded modules
over graded commutative rings are defined similarly to those in
commutative geometry \cite{book09,book00,sard09}.

Let $\cK$ be a graded commutative ring and $\cA$ a graded
commutative $\cK$-ring. Let $P$ and $Q$ be graded $\cA$-modules.
The graded $\cK$-module $\hm_\cK (P,Q)$ of graded $\cK$-module
homomorphisms $\Phi:P\to Q$ can be endowed with the two graded
$\cA$-module structures
\mar{ws11}\beq
(a\Phi)(p)= a\Phi(p),  \qquad  (\Phi\bll a)(p) = \Phi (a p),\qquad
a\in \cA, \quad p\in P, \label{ws11}
\eeq
called $\cA$- and $\cA^\bll$-module structures, respectively. Let
us put
\mar{ws12}\beq
\dl_a\Phi= a\Phi -(-1)^{[a][\Phi]}\Phi\bll a, \qquad a\in\cA.
\label{ws12}
\eeq
An element $\Delta\in\hm_\cK(P,Q)$ is said to be a $Q$-valued
graded differential operator of order $s$ on $P$ if
\be
\dl_{a_0}\circ\cdots\circ\dl_{a_s}\Delta=0
\ee
for any tuple of $s+1$ elements $a_0,\ldots,a_s$ of $\cA$. The set
$\dif_s(P,Q)$ of these operators inherits the graded module
structures (\ref{ws11}).

In particular, zero order graded differential operators obey the
condition
\be
\dl_a \Delta(p)=a\Delta(p)-(-1)^{[a][\Delta]}\Delta(ap)=0, \qquad
a\in\cA, \qquad p\in P,
\ee
i.e., they coincide with graded $\cA$-module morphisms $P\to Q$. A
first order graded differential operator $\Delta$ satisfies the
condition
\be
&& \dl_a\circ\dl_b\,\Delta(p)=
ab\Delta(p)- (-1)^{([b]+[\Delta])[a]}b\Delta(ap)-
(-1)^{[b][\Delta]}a\Delta(bp)+\\
&& \qquad (-1)^{[b][\Delta]+([\Delta]+[b])[a]}
=0, \qquad a,b\in\cA, \quad p\in P.
\ee

For instance, let $P=\cA$. Any zero order $Q$-valued graded
differential operator $\Delta$ on $\cA$ is defined by its value
$\Delta(\bb)$. Then there is a graded $\cA$-module isomorphism
\be
\dif_0(\cA,Q)=Q
\ee
via the association
\be
Q\ni q\mapsto \Delta_q\in \dif_0(\cA,Q),
\ee
where $\Delta_q$ is given by the equality $\Delta_q(\bb)=q$. A
first order $Q$-valued graded differential operator $\Delta$ on
$\cA$ fulfils the condition
\be
\Delta(ab)= \Delta(a)b+ (-1)^{[a][\Delta]}a\Delta(b)
-(-1)^{([b]+[a])[\Delta]} ab \Delta(\bb), \qquad  a,b\in\cA.
\ee
It is called a $Q$-valued graded derivation of $\cA$ if
$\Delta(\bb)=0$, i.e., the graded Leibniz rule
\mar{ws10}\beq
\Delta(ab) = \Delta(a)b + (-1)^{[a][\Delta]}a\Delta(b), \qquad
a,b\in \cA, \label{ws10}
\eeq
holds. One obtains at once that any first order graded
differential operator on $\cA$ falls into the sum
\be
\Delta(a)= \Delta(\bb)a +[\Delta(a)-\Delta(\bb)a]
\ee
of a zero order graded differential operator $\Delta(\bb)a$ and a
graded derivation $\Delta(a)-\Delta(\bb)a$. If $\dr$ is a graded
derivation of $\cA$, then $a\dr$ is so for any $a\in \cA$. Hence,
graded derivations of $\cA$ constitute a graded $\cA$-module
$\gd(\cA,Q)$, called the graded derivation module.

If $Q=\cA$, the graded derivation module $\gd\cA$ is also a Lie
superalgebra over the graded commutative ring $\cK$ with respect
to the superbracket
\mar{ws14}\beq
[u,u']=u\circ u' - (-1)^{[u][u']}u'\circ u, \qquad u,u'\in \cA.
\label{ws14}
\eeq
We have the graded $\cA$-module decomposition
\mar{ws15}\beq
\dif_1(\cA) = \cA \oplus\gd\cA. \label{ws15}
\eeq

Let us turn now to jets of graded modules. Given a graded
$\cA$-module $P$, let us consider the tensor product
$\cA\otimes_\cK P$ of graded $\cK$-modules $\cA$ and $P$. We put
\mar{ws16}\beq
\dl^b(a\otimes p)= (ba)\otimes p - (-1)^{[a][b]}a\otimes (bp),
\qquad p\in P, \quad a,b\in\cA.  \label{ws16}
\eeq
The $k$-order graded jet module $\cJ^k(P)$ of the module $P$ is
defined as the quotient of the graded $\cK$-module $\cA\otimes_\cK
P$ by its submodule generated by elements of type
\be
\dl^{b_0}\circ \cdots \circ\dl^{b_k}(a\otimes p).
\ee

In particular, the first order graded jet module $\cJ^1(P)$
consists of elements $a\ot_1 p$ modulo the relations
\mar{ws20}\beq
ab\otimes_1 p -(-1)^{[a][b]}b\otimes_1 (ap)
  -a\otimes_1(bp)  + \bb\ot_1(abp) =0. \label{ws20}
\eeq

For any $h\in\hm_\cA (\cA\ot P,Q)$, the equality
\be
\dl_b(h(a\ot p))=(-1)^{[h][b]}h(\dl^b(a\ot p))
\ee
holds. One then can show that any $Q$-valued graded differential
operator $\Delta$ of order $k$ on a graded $\cA$-module $P$
factorizes uniquely
\be
\Delta: P\ar^{J^k} \cJ^k(P)\ar Q
\ee
through the morphism
\be
J^k:p\ni p\mapsto \bb\ot_k p\in \cJ^k(P)
\ee
and some homomorphism ${\got f}^\Delta:\cJ^k(P)\to Q$.
Accordingly, the assignment $\Delta\mapsto {\got f}^\Delta$
defines an isomorphism
\mar{ws21}\beq
\dif_s(P,Q)=\hm_{\cA}(\cJ^s(P),Q). \label{ws21}
\eeq

Let us focus on the first order graded jet module $\cJ^1$ of $\cA$
consisting of the elements $a\otimes_1 b$, $a,b\in\cA$, subject to
the relations
\mar{ws22}\beq
ab\otimes_1 \bb -(-1)^{[a][b]}b\otimes_1 a
  -a\otimes_1b  + \bb\ot_1(ab) =0. \label{ws22}
\eeq
It is endowed with the $\cA$- and $\cA^\bll$-module structures
\be
c(a\ot_1 b)=(ca)\ot_1 b,\qquad c\bll(a\ot_1 b)= a\ot_1(cb).
\ee
There are canonical $\cA$- and $\cA^\bll$-module monomorphisms
\be
&& i_1: \cA \ni a  \mapsto a\otimes_1 \bb\in \cJ^1,\\
&& J^1: \cA\ni a\mapsto \bb\otimes_1 a\in \cJ^1,
\ee
such that $\cJ^1$, seen as a graded $\cA$-module, is generated by
the elements $J^1a$, $a\in \cA$. With these monomorphisms, we have
the canonical $\cA$-module splitting
\mar{ws23}\ben
&& \cJ^1=i_1(\cA)\oplus \cO^1, \label{ws23} \\
&&
aJ^1(b)= a\ot_1 b=ab\ot_1\bb + a(\bb\ot_1 b- b\ot_1\bb), \nonumber
\een
where the graded $\cA$-module $\cO^1$ is generated by the elements
\be
\bb\ot_1 b-b\ot_1 \bb, \qquad b\in\cA.
\ee
Let us consider the corresponding $\cA$-module epimorphism
\mar{ws30}\beq
h^1:\cJ^1\ni \bb\ot_1 b\mapsto \bb\ot_1 b-b\ot_1 \bb\in \cO^1
\label{ws30}
\eeq
and the composition
\mar{ws31}\beq
d=h^1\circ J_1: \cA \ni b \mapsto \bb\ot_1 b- b\ot_1\bb \in \cO^1.
\label{ws31}
\eeq
The equality
\be
d(ab)= a\otimes_1 b  +b\otimes_1 a- ab\otimes_1 \bb-ba\otimes_1
\bb  =(-1)^{[a][b]}bda + adb
\ee
shows that $d$ (\ref{ws31}) is an even $\cO^1$-valued derivation
of $\cA$. Seen as a graded $\cA$-module, $\cO^1$ is generated by
the elements $da$ for all $a\in\cA$.

In view of the splittings (\ref{ws15}) and (\ref{ws23}), the
isomorphism (\ref{ws21}) reduces to the isomorphism
\mar{ws47}\beq
\gd\cA=\cO^{1*}=\hm_{\cA}(\cO^1,\cA) \label{ws47}
\eeq
of $\gd\cA$ to the dual $\cO^{1*}$ of the graded $\cA$-module
$\cO^1$. It is given by the duality relations
\mar{ws41}\beq
\gd\cA\ni u\leftrightarrow
   \f_u\in \cO^{1*}, \qquad \f_u(da)=u(a), \qquad  a\in \cA.
   \label{ws41}
\eeq
Using this fact, let us construct a differential calculus over a
graded commutative $\cK$-ring $\cA$.

Let us consider the bigraded exterior algebra $\cO^*$ of a graded
module $\cO^1$. It consists of finite linear combinations of
monomials of the form
\mar{ws42}\beq
\f=a_0 da_1\w\cdots\w da_k, \qquad a_i\in \cA, \label{ws42}
\eeq
whose product obeys the juxtaposition rule
\be
(a_0d a_1)\w (b_0d b_1)=a_0d (a_1b_0)\w db_1- a_0a_1d b_0\w d b_1
\ee
and the bigraded commutative relations
\mar{ws45}\beq
\f\w \f'=(-1)^{|\f||\f'|+[\f][\f']}\f'\w\f. \label{ws45}
\eeq
In order to make $\cO^*$ to a differential algebra, let us define
the coboundary operator $d:\cO^1\to\cO^2$ by the rule
\be
d\f(u,u')=-u'(u(\f)) +(-1)^{[u][u']}u(u'(\f)) +[u',u](\f),
\ee
where $u,u'\in \gd\cA$, $\f\in\cO^1$, and $u$, $u'$ are both
graded derivatives of $\cA$ and $\cA$-valued forms on $\cO^1$. It
is readily observed that, by virtue of the relation (\ref{ws41}),
$(d\circ d)(a)=0$ for all $a\in \cA$. Then $d$ is extended to the
bigraded exterior algebra $\cO^*$ if its action on monomials
(\ref{ws42}) is defined as
\be
d(a_0 da_1\w\cdots\w da_k)=da_0\w da_1\w\cdots\w da_k.
\ee
This operator is nilpotent and fulfils the familiar relations
\mar{ws44}\beq
d(\f\w\f')= d\f\w\f' +(-1)^{|\f||\f'|}\f\w d\f'. \label{ws44}
\eeq
It makes $\cO^*$ into a differential bigraded algebra, called a
graded differential calculus over a graded commutative $\cK$-ring
$\cA$.

Furthermore, one can extend the duality relation (\ref{ws41}) to
the graded interior product of $u\in\gd\cA$ with any monomial $\f$
(\ref{ws42}) by the rules
\mar{ws46}\ben
&& u\rfloor(bda) =(-1)^{[u][b]}u(a), \nonumber\\
&& u\rfloor(\f\w\f')=
(u\rfloor\f)\w\f'+(-1)^{|\f|+[\f][u]}\f\w(u\rfloor\f').
\label{ws46}
\een
As a consequence, any graded derivation $u\in\gd\cA$ of $\cA$
yields a derivation
\mar{+117}\ben
&& \bL_u\f= u\rfloor d\f + d(u\rfloor\f), \qquad \f\in\cO^*, \qquad
u\in\gd\cA, \label{+117} \\
&& \bL_u(\f\w\f')=\bL_u(\f)\w\f' + (-1)^{[u][\f]}\f\w\bL_u(\f'), \nonumber
\een
of the bigraded algebra $\cO^*$ called the graded Lie derivative
of $\cO^*$.

\begin{remark}
Since $\gd\cA$ is a Lie $\cK$-superalgebra, let us consider the
Chevalley--Eilenberg complex $C^*[\gd\cA;\cA]$ where the graded
commutative ring $\cA$ is a regarded as a $\gd\cA$-module
\cite{fuks}. It is the complex
\mar{ws85}\beq
0\to \cA\ar^{\dl^0}C^1[\gd\cA;\cA]\ar^{\dl^1}\cdots
C^k[\gd\cA;\cA]\ar^{\dl^k}\cdots \label{ws85}
\eeq
where
\be
C^k[\gd\cA;\cA]=\hm_\cK(\op\w^k \gd\cA,\cA)
\ee
are $\gd\cA$-modules of $\cK$-linear graded morphisms of the
graded exterior products $\op\w^k \gd\cA$ of the $\cK$-module
$\gd\cA$ to $\cA$. Let us bring homogeneous elements of $\op\w^k
\gd\cA$ into the form
\be
\ve_1\w\cdots\ve_r\w\e_{r+1}\w\cdots\w \e_k, \qquad
\ve_i\in\gd\cA_0, \quad \e_j\in\gd\cA_1.
\ee
Then the coboundary operators of the complex (\ref{ws85}) are
given by the expression
\mar{ws86}\ben
&& \dl^{r+s-1}c(\ve_1\w\cdots\w\ve_r\w\e_1\w\cdots\w\e_s)=
\label{ws86}\\
&&\op\sum_{i=1}^r (-1)^{i-1}\ve_i
c(\ve_1\w\cdots\wh\ve_i\cdots\w\ve_r\w\e_1\w\cdots\e_s)+
\nonumber \\
&& \op\sum_{j=1}^s (-1)^r\ve_i
c(\ve_1\w\cdots\w\ve_r\w\e_1\w\cdots\wh\e_j\cdots\w\e_s)
+\nonumber\\
&& \op\sum_{1\leq i<j\leq r} (-1)^{i+j}
c([\ve_i,\ve_j]\w\ve_1\w\cdots\wh\ve_i\cdots\wh\ve_j
\cdots\w\ve_r\w\e_1\w\cdots\w\e_s)+\nonumber\\
&&\op\sum_{1\leq i<j\leq s} c([\e_i,\e_j]\w\ve_1\w\cdots\w
\ve_r\w\e_1\w\cdots
\wh\e_i\cdots\wh\e_j\cdots\w\e_s)+\nonumber\\
&& \op\sum_{1\leq i<r,1\leq j\leq s} (-1)^{i+r+1}
c([\ve_i,\e_j]\w\ve_1\w\cdots\wh\ve_i\cdots\w\ve_r\w
\e_1\w\cdots\wh\e_j\cdots\w\e_s).\nonumber
\een
The subcomplex $\cO^*[\gd\cA]$ of the complex (\ref{ws85}) of
$\cA$-linear morphisms is the graded Chevalley--Eilenberg
differential calculus over a graded commutative $\cK$-ring $\cA$.
Then one can show that the above mentioned graded differential
calculus $\cO^*$ is a subcomplex of the Chevalley--Eilenberg one
$\cO^*[\gd\cA]$.
\end{remark}

Following the construction of a connection in commutative geometry
\cite{book09,book00,sard09}, one comes to the notion of a
connection on modules over a graded commutative $\Bbb R$-ring
$\cA$.

\begin{definition} \label{ws33} \mar{ws33}
A connection on a graded $\cA$-module $P$ is an $\cA$-module
morphism
\mar{ws34}\beq
\gd\cA\ni u\mapsto \nabla_u\in \dif_1(P,P) \label{ws34}
\eeq
such that the first order differential operators $\nabla_u$ obey
the Leibniz rule
\mar{ws35}\beq
\nabla_u (ap)= u(a)p+ (-1)^{[a][u]}a\nabla_u(p), \quad a\in \cA,
\quad p\in P. \label{ws35}
\eeq
\end{definition}

\begin{definition} \label{ws36} \mar{ws36}
Let $P$ in Definition \ref{ws33} be a graded commutative
$\cA$-ring and $\gd P$ the derivation module of $P$ as a graded
commutative $\cK$-ring. A connection on a graded commutative
$\cA$-ring $P$ is a $\cA$-module morphism
\mar{ws37}\beq
\gd\cA\ni u\mapsto \nabla_u\in \gd P, \label{ws37}
\eeq
which is a connection on $P$ as an $\cA$-module, i.e., obeys the
Leibniz rule (\ref{ws35}).
\end{definition}

\section{Geometry of graded manifolds}

By a graded manifold of dimension $(n,m)$ is meant a local-ringed
space $(Z,\gA)$ where $Z$ is an $n$-dimensional smooth manifold
$Z$ and $\gA=\gA_0\oplus\gA_1$ is a sheaf of graded commutative
algebras of rank $m$ such that \cite{bart}:

(i) there is the exact sequence of sheaves
\mar{cmp140}\beq
0\to \cR \to\gA \op\to^\si C^\infty_Z\to 0, \qquad
\cR=\gA_1+(\gA_1)^2,\label{cmp140}
\eeq
where $C^\infty_Z$ is the sheaf of smooth real functions on $Z$;

(ii) $\cR/\cR^2$ is a locally free sheaf of $C^\infty_Z$-modules
of finite rank (with respect to pointwise operations), and the
sheaf $\gA$ is locally isomorphic to the the exterior product
$\w_{C^\infty_Z}(\cR/\cR^2)$.

The sheaf $\gA$ is called a structure sheaf of the graded manifold
$(Z,\gA)$, while the manifold $Z$ is said to be a body of
$(Z,\gA)$. Sections of the sheaf $\gA$ are called graded
functions. They make up a graded commutative $C^\infty(Z)$-ring
$\gA(Z)$.

A graded manifold $(Z,\gA)$ has the following local structure.
Given a point $z\in Z$, there exists its open neighborhood $U$,
called a splitting domain, such that
\mar{+54}\beq
\gA(U)\cong C^\infty(U)\ot\w\Bbb R^m. \label{+54}
\eeq
It means that the restriction $\gA|_U$ of the structure sheaf
$\gA$ to $U$ is isomorphic to the sheaf $C^\infty_U\ot\w\Bbb R^m$
of sections of some exterior bundle
\be
\w E^*_U= U\times \w\Bbb R^m\to U.
\ee

The well-known Batchelor's theorem \cite{bart} states that such a
structure of graded manifolds is global.

\begin{theorem} \label{lmp1} \mar{lmp1}
Let $(Z,\gA)$ be a graded manifold. There exists a vector bundle
$E\to Z$ with an $m$-dimensional typical fibre $V$ such that the
structure sheaf $\gA$ of $(Z,\gA)$ is isomorphic to the structure
sheaf $\gA_E$ of sections of the exterior bundle $\w E^*$, whose
typical fibre is the Grassmann algebra $\w V^*$.
\end{theorem}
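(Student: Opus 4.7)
The plan is to manufacture $E$ intrinsically from $(Z,\gA)$ and then construct a global isomorphism $\gA_E \cong \gA$ by patching local splittings via a partition-of-unity / non-abelian \v{C}ech argument.

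First I would extract the bundle. By hypothesis (ii), $\cR/\cR^2$ is a locally free sheaf of $C^\infty_Z$-modules of rank $m$, hence the sheaf of smooth sections of a rank-$m$ real vector bundle, whose dual I denote $E\to Z$; thus $\cR/\cR^2 \cong \Gamma(E^*)$. The typical fibre $V$ of $E$ has dimension $m$, and the fibre of $\w E^*$ is the Grassmann algebra $\w V^*$, matching the statement. This $E$ is the candidate bundle; it remains to identify $\gA$ with $\gA_E = \Gamma_{\w E^*}$.

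Second, cover $Z$ by open sets $U_\alpha$ that are splitting domains (\ref{+54}) and over which $E$ is trivial. On each $U_\alpha$, (\ref{+54}) together with the trivialization of $E|_{U_\alpha}$ furnishes an isomorphism $\vf_\alpha\colon \gA_E|_{U_\alpha}\to \gA|_{U_\alpha}$ of sheaves of graded commutative algebras that covers the body map $\si$; composing with an automorphism of $\w E^*|_{U_\alpha}$ I can additionally arrange that $\vf_\alpha$ induces the identity on $\cR/\cR^2$. The transition cocycle $g_{\alpha\beta}=\vf_\alpha^{-1}\circ\vf_\beta$ then takes values in the sheaf $\cG$ of algebra automorphisms of $\w E^*$ that restrict to $\id$ on $C^\infty_Z$ and induce $\id$ on $E^*$. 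A global isomorphism $\gA_E\cong\gA$ amounts to trivializing $[g_{\alpha\beta}]\in H^1(Z,\cG)$.

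Third, and most delicate, I would analyze $\cG$ through its nilpotent filtration $\cG=\cG^{(1)}\supset\cG^{(2)}\supset\cdots\supset\cG^{(m)}\supset\cG^{(m+1)}=\{\id\}$, where $g\in\cG^{(k)}$ iff $(g-\id)$ sends $\w^p E^*$ into $\w^{p+k}E^*$ for all $p$. Each successive quotient $\cG^{(k)}/\cG^{(k+1)}$ is canonically an abelian sheaf of $C^\infty_Z$-modules (a subsheaf of $\hm_{C^\infty_Z}(E^*,\w^{k+1}E^*)$), hence fine because $Z$ admits smooth partitions of unity, so its first \v{C}ech cohomology vanishes. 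An induction on $k$ using this vanishing modifies the local $\vf_\alpha$ step by step so as to trivialize $g_{\alpha\beta}$ modulo $\cG^{(k+1)}$; the filtration terminates after $m$ steps because $\w E^*$ is concentrated in degrees $\le m$, so a genuine trivialization is reached and yields the desired global sheaf isomorphism $\gA_E\cong\gA$. The main obstacle is exactly this bookkeeping: verifying that the $\cG^{(k)}$ are normal subsheaves of $\cG$ and that the quotients are abelian fine sheaves, so that the passage from non-abelian $H^1(\cG)$ to abelian $H^1$ of its graded pieces is legitimate; once these identifications are made, the cohomology vanishing is automatic from the softness of $C^\infty_Z$.
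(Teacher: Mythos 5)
Your overall strategy is the same one the paper (and the literature it cites) relies on: take $E^*$ to be the bundle with $\cR/\cR^2\cong E^*(Z)$, compare $\gA$ with $\gA_E$ over splitting domains, and kill the resulting non-abelian \v{C}ech cocycle by an induction along a nilpotent filtration whose graded pieces are fine sheaves of $C^\infty_Z$-modules. That is exactly the content of the paper's "bijection between $H^1(Z;\Is(\w\Bbb R^m)^\infty)$ and $H^1(Z;GL(m,\Bbb R)^\infty)$", so the route is right. But there is a genuine gap in your specification of the gauge sheaf $\cG$. The local splitting isomorphisms $\vf_\al$ are only required to cover the body epimorphism $\si$; hence the transition automorphisms $g_{\al\bt}=\vf_\al^{-1}\circ\vf_\bt$ induce the identity on the quotient $\gA_E/\cR_E\cong C^\infty_Z$, but they need \emph{not} restrict to the identity on the subalgebra $C^\infty_Z\subset \w E^*$. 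A general automorphism over the identity of $Z$ moves an even function by nilpotent terms involving its derivatives (e.g. $f\mapsto f+\dr_Af\,v^Ac^1c^2+\cdots$, the even counterpart of the transformations (\ref{+95})); different splittings embed $C^\infty_U$ into $\gA(U)$ differently, and arranging all these embeddings to agree on overlaps is equivalent to producing a global body embedding $C^\infty_Z\to\gA$, which is part of what has to be proved, not an input. So your cocycle does not, in general, take values in the sheaf you defined, and your identification of the graded quotients as subsheaves of $\hm_{C^\infty_Z}(E^*,\w^{k+1}E^*)$ is correspondingly too small: it misses the "vector-field" components.

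The repair is straightforward and keeps your architecture intact: let $\cG$ be the sheaf of (even) algebra automorphisms of $\w E^*$ over $\id_Z$ that induce the identity on $\gA_E/\cR_E\cong C^\infty_Z$ and on $\cR_E/\cR_E^2\cong E^*$, filtered as you propose by the filtration degree by which $g-\id$ raises $\cR_E$-adic order. The leading term of $g-\id$ is then a degree-$k$ derivation of $\w E^*$, so the graded quotient in degree $k$ is the sheaf of sections of $\w^kE^*\ot_Z TZ\;\oplus\;\w^{k+1}E^*\ot_Z E$ (compare the sheaf $\cV_E$, (\ref{+243})); since the automorphisms are even, only even $k\geq 2$ contribute, so the filtration advances in steps of two. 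These quotients are still locally free sheaves of $C^\infty_Z$-modules, hence fine with vanishing $H^1$, the normality and abelianness checks go through as you indicate (lifts of quotient sections exist, e.g. by exponentiating the nilpotent derivation), and the induction terminates because $\w E^*$ lives in degrees $\leq m$. With this enlargement of $\cG$ your argument becomes a complete proof, in fact a more detailed one than the paper's two-sentence sketch.
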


\begin{proof}
The local sheaves $C^\infty_U\ot\w\Bbb R^m$ are glued into the
global structure sheaf $\gA$ of the graded manifold $(Z,\gA)$ by
means of transition functions, which are assembled into a cocycle
of the sheaf $\Is(\w\Bbb R^m)^\infty$ of smooth mappings from $Z$
to $\Is(\w\Bbb R^m)$. The proof is based on the bijection between
the cohomology sets $H^1(Z;\Is(\w\Bbb R^m)^\infty)$ and
$H^1(Z;GL(m,\Bbb R^m)^\infty)$.
\end{proof}

It should be emphasized that Batchelor's isomorphism in Theorem
\ref{lmp1} fails to be canonical. At the same time, there are many
physical models where a vector bundle $E$ is introduced from the
beginning. In this case, it suffices to consider the structure
sheaf $\gA_E$ of the exterior bundle $\w E^*$. We agree to call
the pair $(Z,\gA_E)$ a simple graded manifold. Its automorphisms
are restricted to those induced by automorphisms of the vector
bundle $E\to Z$, called the characteristic vector bundle of the
simple graded manifold $(Z,\gA_E)$. Accordingly, the structure
module
\be
\gA_E(Z)=\w E^*(Z)
\ee
of the sheaf $\gA_E$ (and of the exterior bundle $\w E^*$) is said
to be the structure module of the simple graded manifold
$(Z,\gA_E)$.

Combining Batchelor Theorem \ref{lmp1} and classical Serre--Swan
theorem \cite{book09,sard09}, we come to the following Serre--Swan
theorem for graded manifolds.

\begin{theorem} \label{vv0} \mar{vv0}
Let $Z$ be a smooth manifold. A graded commutative
$C^\infty(Z)$-algebra $\cA$ is isomorphic to the structure ring of
a graded manifold with a body $Z$ iff it is the exterior algebra
of some projective $C^\infty(Z)$-module of finite rank.
\end{theorem}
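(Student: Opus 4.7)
The plan is to deduce the theorem as a direct consequence of Batchelor's Theorem \ref{lmp1} together with the classical Serre--Swan theorem for smooth vector bundles, handling the two implications in parallel.

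For the ``only if'' direction, I would assume $\cA = \gA(Z)$ is the structure ring of a graded manifold $(Z,\gA)$. By Theorem \ref{lmp1} there is a finite-rank vector bundle $E \to Z$ with $\gA \cong \gA_E$, and taking global sections gives $\cA \cong \w E^*(Z)$. Identifying $\w E^*(Z)$ with the algebraic exterior algebra of the $C^\infty(Z)$-module $E^*(Z)$, and invoking the classical Serre--Swan theorem to assert that $E^*(Z)$ is a finitely generated projective $C^\infty(Z)$-module, one obtains the required representation of $\cA$. Conversely, suppose $\cA \cong \w P$ for a finitely generated projective $C^\infty(Z)$-module $P$. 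The classical Serre--Swan theorem supplies a vector bundle $F \to Z$ of finite rank with $P \cong F(Z)$; setting $E = F^*$ gives $E^*(Z) \cong P$ and hence $\cA \cong \w E^*(Z) = \gA_E(Z)$, realising $\cA$ as the structure ring of the simple graded manifold $(Z, \gA_E)$.

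The main (though routine) point to verify is the identification of $\w E^*(Z)$ with the algebraic exterior algebra $\w_{C^\infty(Z)} E^*(Z)$ as graded commutative $C^\infty(Z)$-algebras, which bridges the geometric and algebraic meanings of the exterior construction. One constructs the natural $C^\infty(Z)$-algebra morphism from the algebraic exterior algebra into the module of global sections of $\w E^*$, checks that it is an isomorphism on any trivialising open of $E$ (where both sides are manifestly $C^\infty(U) \otimes \w \Bbb R^m$ with matching multiplication), and patches the local isomorphisms using a partition of unity subordinate to a trivialising cover. Once this identification is in place, the theorem is a mechanical composition of Batchelor's theorem with the classical Serre--Swan theorem, and no further obstacle arises.
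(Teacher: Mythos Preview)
Your proposal is correct and follows exactly the approach indicated in the paper, which simply states that the result is obtained by ``combining Batchelor Theorem \ref{lmp1} and classical Serre--Swan theorem'' without giving further details. You have in fact supplied more detail than the paper does, including the verification that $\w E^*(Z) \cong \w_{C^\infty(Z)} E^*(Z)$, which the paper leaves implicit.
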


Given a simple graded manifold $(Z,\gA_E)$, every trivialization
chart $(U; z^A,y^a)$ of the vector bundle $E\to Z$ is a splitting
domain of $(Z,\gA_E)$. Graded functions on such a chart are
$\La$-valued functions
\mar{z785}\beq
f=\op\sum_{k=0}^m \frac1{k!}f_{a_1\ldots a_k}(z)c^{a_1}\cdots
c^{a_k}, \label{z785}
\eeq
where $f_{a_1\cdots a_k}(z)$ are smooth functions on $U$ and
$\{c^a\}$ is the  fibre basis for $E^*$. In particular, the sheaf
epimorphism $\si$ in (\ref{cmp140}) is induced by the body map of
$\La$. We agree to call $\{z^A,c^a\}$ the local basis for the
graded manifold $(Z,\gA_E)$. Transition functions
\be
y'^a=\rho^a_b(z^A)y^b
\ee
of bundle coordinates on $E\to Z$ induce the corresponding
transformation
\mar{+6}\beq
c'^a=\rho^a_b(z^A)c^b \label{+6}
\eeq
of the associated local basis for the graded manifold $(Z,\gA_E)$
and the according coordinate transformation law of graded
functions (\ref{z785}).

\begin{remark}
Although graded functions are locally represented by $\La$-valued
functions (\ref{z785}), they are not $\La$-valued functions on a
manifold $Z$ because of the transformation law (\ref{+6}).
\end{remark}

\begin{remark}
Let us note that general automorphisms of a graded manifold take
the form
\mar{+95}\beq
c'^a=\rho^a(z^A,c^b), \label{+95}
\eeq
where $\rho^a(z^A,c^b)$ are local graded functions. Considering
simple graded manifolds, we actually restrict the class of graded
manifold transformations (\ref{+95}) to the linear ones
(\ref{+6}), compatible with given Batchelor's isomorphism.
\end{remark}

Let $E\to Z$ and $E'\to Z$ be vector bundles and $\Phi: E\to E'$
their bundle morphism over a morphism $\zeta: Z\to Z'$. Then every
section $s^*$ of the dual bundle $E'^*\to Z'$ defines the
pull-back section $\Phi^*s^*$ of the dual bundle $E^*\to Z$ by the
law
\be
v_z\rfloor \Phi^*s^*(z)=\Phi(v_z)\rfloor s^*(\zeta(z)), \qquad
v_z\in E_z.
\ee
 It follows that a linear bundle morphism
$\Phi$ yields a morphism
\mar{w901}\beq
S\Phi: (Z,\gA_E) \to (Z',\gA_{E'}) \label{w901}
\eeq
 of simple graded manifolds seen as
local-ringed spaces \cite{bart,book00,sard09}. This is the pair
$(\zeta,\zeta_*\circ\Phi^*)$ of the morphism $\zeta$ of the body
manifolds and the composition of the pull-back
\be
 \gA_{E'}\ni
f\mapsto \Phi^*f\in\gA_E
\ee
 of graded functions and the direct
image $\zeta_*$ of the sheaf $\gA_E$ onto $Z'$. Relative to local
bases $(z^A,c^a)$ and $(z'^A,c'^a)$ for $(Z,\gA_E)$ and
$(Z',\gA_{E'})$ respectively, the morphism (\ref{w901}) reads
\be
S\Phi(z)=\zeta(z), \qquad S\Phi(c'^a)=\Phi^a_b(z)c^b.
\ee
Accordingly, the pull-back onto $Z$ of graded exterior forms on
$Z'$ is defined.

Given a graded manifold $(Z,\gA)$, by the sheaf $\gd\gA$ of graded
derivations of $\gA$ is meant a subsheaf of endomorphisms of the
structure sheaf $\gA$ such that any section $u$ of $\gd\gA$ over
an open subset $U\subset Z$ is a graded derivation of the graded
algebra $\gA(U)$. Conversely, one can show that, given open sets
$U'\subset U$, there is a surjection of the derivation modules
\be
\gd(\gA(U))\to \gd(\gA(U'))
\ee
\cite{bart}. It follows that any graded derivation of the local
graded algebra $\gA(U)$ is also a local section over $U$ of the
sheaf $\gd\gA$. Sections of $\gd\gA$ are  called graded vector
fields on the graded manifold $(Z,\gA)$. They make up the graded
derivation module $\gd\gA(Z)$ of the graded commutative $\Bbb
R$-ring $\gA(Z)$. This module is a Lie superalgebra with respect
to the superbracket (\ref{ws14}).

In comparison with general theory of graded manifolds, an
essential simplification is that graded vector fields on a simple
graded manifold $(Z,\gA_E)$ can be seen as sections of a vector
bundle as follows.

Due to the vertical splitting
\be
VE\cong E\times E,
\ee
the vertical tangent bundle $VE$ of $E\to Z$ can be provided with
the fibre bases $\{\dr/\dr c^a\}$, which are the duals of the
bases $\{c^a\}$. These are the fibre bases for
\be
\pr_2VE\cong E.
\ee
Then graded vector fields on a trivialization chart $(U;z^A,y^a)$
of $E$ read
\mar{hn14}\beq
u= u^A\dr_A + u^a\frac{\dr}{\dr c^a}, \label{hn14}
\eeq
where $u^\la, u^a$ are local graded functions on $U$. In
particular,
\be
\frac{\dr}{\dr c^a}\circ\frac{\dr}{\dr c^b} =-\frac{\dr}{\dr
c^b}\circ\frac{\dr}{\dr c^a}, \qquad \dr_A\circ\frac{\dr}{\dr
c^a}=\frac{\dr}{\dr c^a}\circ \dr_A.
\ee
The derivations (\ref{hn14}) act on graded functions
$f\in\gA_E(U)$ (\ref{z785}) by the rule
\mar{cmp50a}\beq
u(f_{a\ldots b}c^a\cdots c^b)=u^A\dr_A(f_{a\ldots b})c^a\cdots c^b
+u^k f_{a\ldots b}\frac{\dr}{\dr c^k}\rfloor (c^a\cdots c^b).
\label{cmp50a}
\eeq
This rule implies the corresponding coordinate transformation law
\be
u'^A =u^A, \qquad u'^a=\rho^a_ju^j +u^A\dr_A(\rho^a_j)c^j
\ee
of graded vector fields. It follows that graded vector fields
(\ref{hn14}) can be represented by sections of the vector bundle
$\cV_E\to Z$ which is locally isomorphic to the vector bundle
\mar{+243}\beq
\cV_E|_U\approx\w E^*\op\ot_Z(E\op\oplus_Z TZ)|_U, \label{+243}
\eeq
and is characterized by an atlas of bundle coordinates
\be
(z^A,z^A_{a_1\ldots a_k},v^i_{b_1\ldots b_k}), \qquad
k=0,\ldots,m,
\ee
possessing the transition functions
\be
 && z'^A_{i_1\ldots
i_k}=\rho^{-1}{}_{i_1}^{a_1}\cdots
\rho^{-1}{}_{i_k}^{a_k} z^A_{a_1\ldots a_k}, \\
&& v'^i_{j_1\ldots j_k}=\rho^{-1}{}_{j_1}^{b_1}\cdots
\rho^{-1}{}_{j_k}^{b_k}\left[\rho^i_jv^j_{b_1\ldots b_k}+
\frac{k!}{(k-1)!} z^A_{b_1\ldots b_{k-1}}\dr_A\rho^i_{b_k}\right],
\ee
which fulfil the cocycle condition. Thus, the derivation module
$\gd\gA_E(Z)$ is isomorphic to the structure module $\cV_E(Z)$ of
global sections of the vector bundle $\cV_E\to Z$.

There is the exact sequence
\mar{1030}\beq
0\to \w E^*\op\ot_Z E\to\cV_E\to \w E^*\op\ot_Z TZ\to 0
\label{1030}
\eeq
of vector bundles over $Z$. Its splitting
\mar{cmp70}\beq
\wt\g:\dot z^A\dr_A \mapsto \dot z^A(\dr_A
+\wt\g_A^a\frac{\dr}{\dr c^a}) \label{cmp70}
\eeq
transforms every vector field $\tau$ on $Z$ into the graded vector
field
\mar{ijmp10}\beq
\tau=\tau^A\dr_A\mapsto \nabla_\tau=\tau^A(\dr_A
+\wt\g_A^a\frac{\dr}{\dr c^a}), \label{ijmp10}
\eeq
which is a graded derivation of the graded commutative $\Bbb
R$-ring $\gA_E(Z)$ satisfying the Leibniz rule
\be
\nabla_\tau(sf)=(\tau\rfloor ds)f +s\nabla_\tau(f), \quad
f\in\gA_E(Z), \quad s\in C^\infty(Z).
\ee
It follows that the splitting (\ref{cmp70}) of the exact sequence
(\ref{1030}) yields a connection on the graded commutative
$C^\infty(Z)$-ring $\gA_E(Z)$ in accordance with Definition
\ref{ws36}. It is called a graded connection on the simple graded
manifold $(Z,\gA_E)$. In particular, this connection provides the
corresponding horizontal splitting
\be
u= u^A\dr_A + u^a\frac{\dr}{\dr c^a}=u^A(\dr_A
+\wt\g_A^a\frac{\dr}{\dr c^a}) + (u^a- u^A\wt\g_A^a)\frac{\dr}{\dr
c^a}
\ee
of graded vector fields.

\begin{remark} \label{+94} \mar{+94}
By virtue of the isomorphism (\ref{+54}), any connection $\wt \g$
on a graded manifold $(Z,\gA)$, restricted to a splitting domain
$U$, takes the form (\ref{cmp70}). Given two splitting domains $U$
and $U'$ of $(Z,\gA)$ with the transition functions (\ref{+95}),
the connection components $\wt\g^a_A$ obey the transformation law
\mar{+96}\beq
\wt\g'^a_A= \wt\g^b_A\frac{\dr}{\dr c^b}\rho^a +\dr_A\rho^a.
\label{+96}
\eeq
If $U$ and $U'$ are the trivialization charts of the same vector
bundle $E$ in Theorem \ref{lmp1} together with the transition
functions (\ref{+6}), the transformation law (\ref{+96}) takes the
form
\mar{+97}\beq
\wt\g'^a_A= \rho^a_b(z)\wt\g^b_A +\dr_A\rho^a_b(z)c^b. \label{+97}
\eeq
\end{remark}

\begin{remark}
It should be emphasized that the above notion of a graded
connection is a connection on the graded commutative ring
$\gA_E(Z)$ seen as a $C^\infty(Z)$-module. It differs from that of
a  connection on a graded fibre bundle $(Z,\gA)\to (X,\cB)$ in
\cite{alm}. The latter is a connection on a graded $\cB(X)$-module
represented by a section of the jet graded bundle $J^1(Z/X)\to
(Z,\gA)$ of sections of the graded fibre bundle
   $(Z,\gA)\to (X,\cB)$ \cite{rup}.
\end{remark}

\begin{example} \label{1031} \mar{1031}
Every linear connection
\be
\g=dz^A\ot (\dr_A +\g_A{}^a{}_by^b \dr_a)
\ee
on the vector bundle $E\to Z$ yields the graded connection
\mar{cmp73}\beq
\g_S=dz^A\ot (\dr_A +\g_A{}^a{}_bc^b\frac{\dr}{\dr c^a})
\label{cmp73}
\eeq
on the simple graded manifold $(Z,\gA_E)$. In view of Remark
\ref{+94}, $\g_S$ is also a graded connection on the graded
manifold
\be
(Z,\gA)\cong (Z,\gA_E),
\ee
but its linear form (\ref{cmp73}) is not maintained under the
transformation law (\ref{+96}).
\end{example}

The curvature of the graded connection $\nabla_\tau$
(\ref{ijmp10}) is defined by the familiar expression
\mar{+110}\ben
&&
R(\tau,\tau')=[\nabla_\tau,\nabla_{\tau'}]-\nabla_{[\tau,\tau']},\nonumber\\
&& R(\tau,\tau') =\tau^A\tau'^B R^a_{AB}\frac{\dr}{\dr c^a}:
\gA_E(Z)\to \gA_E(Z),
\nonumber\\
&&R^a_{AB} =\dr_A\wt\g^a_B-\dr_B\wt\g^a_A
+\wt\g^k_A\frac{\dr}{\dr c^k}\wt\g^a_B -
   \wt\g^k_B\frac{\dr}{\dr c^k} \wt\g^a_A.\label{+110}
\een
It can also be written in the form
\mar{+111}\ben
&&R:\gA_E(Z)\to \cO^2(Z)\ot \gA_E(Z), \nonumber \\
&& R =\frac12 R_{AB}^a dz^A\w dz^B\ot\frac{\dr}{\dr c^a}. \label{+111}
\een

Let now $\cV^*_E\to  Z$ be a vector bundle which is the pointwise
$\w E^*$-dual of the vector bundle $\cV_E\to Z$. It is locally
isomorphic to the vector bundle
\mar{+244}\beq
\cV^*_E|_U\approx \w E^*\op\ot_Z(E^*\op\oplus_Z T^*Z)|_U.
\label{+244}
\eeq
With respect to the dual bases $\{dz^A\}$ for $T^*Z$ and
$\{dc^b\}$ for
\be
\pr_2V^*E\cong E^*,
\ee
sections of the vector bundle $\cV^*_E$ take the coordinate form
\be
\f=\f_A dz^A + \f_adc^a,
\ee
together with transition functions
\be
\f'_a=\rho^{-1}{}_a^b\f_b, \qquad \f'_A=\f_A
+\rho^{-1}{}_a^b\dr_A(\rho^a_j)\f_bc^j.
\ee
They are regarded as graded exterior one-forms on the graded
manifold $(Z,\gA_E)$, and make up the $\gA_E(Z)$-dual
\be
\cC^1_E=\gd\gA_E(Z)^*
\ee
of the derivation module
\be
\gd\gA_E(Z)=\cV_E(Z).
\ee
Conversely,
\be
\gd\gA_E(Z)=(\cC^1_E)^*.
\ee
The duality morphism is given by the graded interior product
\mar{cmp65}\beq
u\rfloor \f=u^A\f_A + (-1)^{\nw{\f_a}}u^a\f_a. \label{cmp65}
\eeq
In particular, the dual of the exact sequence (\ref{1030}) is the
exact sequence
\mar{cmp72}\beq
0\to \w E^*\op\ot_ZT^*Z\to\cV^*_E\to \w E^*\op\ot_Z E^*\to 0.
\label{cmp72}
\eeq
Any graded connection $\wt\g$ (\ref{cmp70}) yields the splitting
of the exact sequence (\ref{cmp72}), and determines the
corresponding decomposition of graded one-forms
\be
\f=\f_A dz^A + \f_adc^a =(\f_A+\f_a\wt\g_A^a)dz^A +\f_a(dc^a
-\wt\g_A^adz^A).
\ee

Higher degree graded exterior forms  are defined as sections of
the exterior bundle $\op\w^k_Z\cV^*_E$. They make up a bigraded
algebra $\cC^*_E$ which is isomorphic to the bigraded exterior
algebra of the graded module $\cC^1_E$ over $\cC^0_E=\gA(Z)$. This
algebra is locally generated by graded forms $dz^A$, $dc^i$ such
that
\mar{+113'}\beq
dz^A\w dc^i=-dc^i\w dz^A, \qquad dc^i\w dc^j= dc^j\w dc^i.
\label{+113'}
\eeq

The graded exterior differential $d$ of graded functions is
introduced by the condition $u\rfloor df=u(f)$ for an arbitrary
graded vector field $u$, and is extended uniquely to graded
exterior forms by the rule (\ref{ws44}). It is given by the
coordinate expression
\be
d\f= dz^A \w \dr_A\f +dc^a\w \frac{\dr}{\dr c^a}\f,
\ee
where the derivatives $\dr_\la$, $\dr/\dr c^a$ act on coefficients
of graded exterior forms by the formula (\ref{cmp50a}), and they
are graded commutative with the graded forms $dz^A$ and $dc^a$.
The formulae (\ref{ws45}) -- (\ref{+117}) hold.

The graded exterior differential $d$ makes $\cC^*_E$ into a
bigraded differential algebra whose de Rham complex reads
\mar{+137}\beq
0\to\Bbb R\to \gA_E(Z)\ar^d \cC^1_E \ar^d \cdots \cC^k_E \ar^d
\cdots. \label{+137}
\eeq
Its cohomology $H^*_{GR}(Z)$  is called the graded de Rham
cohomology of the graded manifold $(Z,\gA_E)$. One can compute
this cohomology with the aid of the abstract de Rham theorem. Let
$\gO^k\gA_E$ denote the sheaf of germs of graded $k$-forms on
$(Z,\gA_E)$. Its structure module is $\cC^k_E$. These sheaves make
up the complex
\mar{1033}\beq
0\to\Bbb R\ar \gA_E \ar^d \gO^1\gA_E\ar^d\cdots
\gO^k\gA_E\ar^d\cdots. \label{1033}
\eeq
Its members $\gO^k\gA_E$ are sheaves of $C^\infty_Z$-modules on
$Z$ and, consequently, are fine and acyclic. Furthermore, the
Poincar\'e lemma for graded exterior forms holds \cite{bart}. It
follows that the complex (\ref{1033}) is a fine resolution of the
constant sheaf $\Bbb R$ on the manifold $Z$.  Then, by virtue of
the abstract de Rham theorem \cite{book09,sard09}, there is
   an isomorphism
\mar{+136}\beq
H^*_{GR}(Z)=H^*(Z;\Bbb R)=H^*(Z) \label{+136}
\eeq
of the graded de Rham cohomology $H^*_{GR}(Z)$ to the de Rham
cohomology $H^*(Z)$ of the smooth manifold $Z$. Moreover, the
cohomology isomorphism (\ref{+136}) accompanies the cochain
monomorphism $\cO^*(Z)\to \cC^*_E$ of the de Rham complex
$\cO^*(Z)$ of smooth exterior forms on $Z$ to the graded de Rham
complex (\ref{+137}). Hence, any closed graded exterior form is
split into a sum $\f=d\si +\vf$ of an exact graded exterior form
$d\si\in \cO^*\gA_E$ and a closed exterior form $\vf\in \cO^*(Z)$
on $Z$.

\section{Superfunctions}

By analogy with smooth manifolds, supermanifolds are constructed
by gluing together of open subsets of supervector spaces $B^{n,m}$
with the aid  of transition superfunctions \cite{bart,book00}.
Therefore, let us start with the notion of a superfunction.

Though there are different classes of superfunctions, they can be
introduced in the same manner as follows.

Let
\be
B^{n,m}=\La^n_0\oplus \La^m_1,\qquad n,m\geq 0,
\ee
be a supervector space, where $\La$ is a Grassmann algebra of rank
$0<N\geq m$. Let
\be
\si^{n,m}: B^{n,m}\to \Bbb R^n, \qquad s: B^{n,m}\to R^{n,m}=R^n_0
\oplus R_1^m
\ee
be the corresponding body and soul maps (see the decomposition
(\ref{+11})). Then any element $q\in B^{n,m}$ is uniquely split as
\mar{+10}\beq
q=(x,y)=(\si(x^i) + s(x^i))e^0_i + y^je^1_j, \label{+10}
\eeq
where $\{e^0_i,e^1_j\}$ is a basis for $B^{n,m}$ and $\si(x^i)\in
\Bbb R$, $s(x^i)\in R_0$, $y^j\in R_1$.

Let $\La'$ be another Grassmann algebra of rank $0\leq N'\leq N$
which is treated as a subalgebra of $\La$, i.e., the basis
$\{c^a\}$, $a=1,\ldots,N'$, for $\La'$ is a subset of the basis
$\{c^i\}$, $i=1,\ldots,N$, for $\La$. Given an open subset
$U\subset \Bbb R^n$, let us consider a $\La'$-valued function
\mar{+12}\beq
f(z)=\op\sum_{k=0}^{N'} \frac1{k!}f_{a_1\ldots
a_k}(z)c^{a_1}\cdots c^{a_k} \label{+12}
\eeq
on $U$ with smooth coefficients $f_{a_1\cdots a_k}(z)$, $z\in U$.
It is a graded function on $U$. Its prolongation to
$(\si^{n,0})^{-1}(U)\subset B^{n,0}$ is defined as the formal
Taylor series
\mar{+14}\beq
f(x)= \op\sum_{k=0}^{N'} \frac1{k!}\left[
   \op\sum_{p=0}^N\frac{1}{p!}\frac{\dr^pf_{a_1\ldots a_k}}{\dr
z^{i_1}\cdots \dr z^{i_p}}(\si(x))s(x^{i_1})\cdots
s(x^{i_p})\right]c^{a_1}\cdots c^{a_k}. \label{+14}
\eeq
Then a superfunction $F(q)$ on
\be
(\si^{n,m})^{-1}(U)\subset B^{n,m}
\ee
is given by a sum
\mar{+13}\beq
F(q)=F(x,y)= \op\sum_{r=0}^m \frac1{r!} f_{j_1\ldots
j_r}(x)y^{j_1}\cdots y^{j_r}, \label{+13}
\eeq
where $f_{j_1\ldots j_r}(x)$ are functions (\ref{+14}). However,
the representation of a superfunction $F(x,y)$ by the sum
(\ref{+13}) need not be unique.

The germs of superfunctions (\ref{+13}) constitute the sheaf
$\gS_{N'}$ of graded commutative $\La'$-algebras on $B^{n,m}$, but
it is not a sheaf of $C^\infty_{B^{n,m}}$-modules since
superfunctions are expressed in Taylor series.

Using the representation (\ref{+13}), one can define derivatives
of superfunctions as follows. Let $f(x)$ be a superfunction  on
$B^{n,0}$. Since $f$, by definition, is the Taylor series
   (\ref{+14}), its
partial derivative along an even coordinate $x^i$ is defined in a
natural way as
\mar{+16}\ben
&& \dr_if(x)=(\dr_if)(\si(x),s(x))=
\label{+16}\\ && \qquad \op\sum_{k=0}^{N'} \frac1{k!}\left[
   \op\sum_{p=0}^N\frac{1}{p!}\frac{\dr^{p+1}f_{a_1\ldots a_k}}{\dr
z^i\dr z^{i_1}\cdots \dr z^{i_p}}(\si(x))s(x^{i_1})\cdots
s(x^{i_p})\right]c^{a_1}\cdots c^{a_k}. \nonumber
\een
This even derivative is extended to superfunctions $F$ on
$B^{n,m}$ in spite of the fact that the representation (\ref{+13})
is not necessarily unique. However, the definition of odd
derivatives of superfunctions is more intricate.

Let $\gS_{N'}^0\subset \gS_{N'}$ be the subsheaf of superfunctions
$F(x,y)=f(x)$ (\ref{+14}) independent of the odd arguments $y^j$.
Let $\w\Bbb R^m$ be a Grassmann algebra generated by
$(a^1,\ldots,a^m)$. The expression (\ref{+13}) implies that, for
any open subset $U\subset B^{n,m}$, there exists the sheaf
morphism
\mar{+15,7}\ben
&&\la: \gS^0_{N'}\ot\w\Bbb R^m \to \gS_{N'}, \label{+15}\\
&& \la(x,y):\op\sum_{r=0}^m \frac1{r!} f_{j_1\ldots
j_r}(x)\ot(a^{j_1}\cdots a^{j_r})\to \label{+17}\\
&& \qquad \op\sum_{r=0}^m \frac1{r!} f_{j_1\ldots
j_r}(x)y^{j_1}\cdots y^{j_r},\nonumber
\een
over $B^{n,m}$. Clearly, the morphism $\la$ (\ref{+15}) is an
epimorphism. One can show that this epimorphism is injective and,
consequently, is an isomorphism if and only if
\mar{+20}\beq
N-N'\geq m \label{+20}
\eeq
\cite{bart}. Roughly speaking, in this case, there exists a tuple
of elements $y^{j_1},\ldots,y^{j_r}\in \La$ for each superfunction
$f$ such that
\be
\la(f\ot(a^{j_1}\cdots a^{j_r}))\neq 0
\ee
at the point $(x,y^{j_1},\ldots,y^{j_m})$ of $B^{n,m}$.

If the condition (\ref{+20}) holds, the representation of each
superfunction $F(x,y)$ by the sum (\ref{+13}) is unique, and it is
an image of some section $f\ot a$ of the sheaf
$\gS^0_{N'}\ot\w\Bbb R^m$ with respect to the morphism $\la$
(\ref{+17}). Then an odd derivative of $F$ is defined as
\be
\frac{\dr}{\dr y^j}(\la(f\ot y))=\la (f\ot \frac{\dr}{\dr
a^j}(a)).
\ee
This definition is consistent only if $\la$ is an isomorphism,
i.e., the relation (\ref{+20}) holds. If otherwise, there exists a
non-vanishing element $f\ot a$ such that
\be
\la(f\ot a)=0,
\ee
whereas
\be
\la (f\ot \dr_j(a))\neq 0.
\ee
For instance, if
\be
N-N'=m-1,
\ee
such an element is
\be
f\ot a=c^1\cdots c^{N'}\ot(a^1\cdots a^m).
\ee

In order to classify superfunctions, we follow the terminology of
\cite{bart,rog86,rog07}.

$\bullet$ If $N'=N$, one deals with $G^\infty$-superfunctions,
introduced in \cite{rog80}. In this case, the inequality
(\ref{+20}) is not satisfied, unless $m=0$.

$\bullet$ If the condition (\ref{+20}) holds,
$\gS_{N'}=\ccG\cH_{N'}$ is the sheaf of
$GH^\infty$-superfunctions.

$\bullet$ In particular, if $N'=0$, the condition (\ref{+20}) is
satisfied, and $\gS_{N'}=\cH^\infty$ is the sheaf of
$H^\infty$-superfunctions
\mar{+41}\beq
F(x,y)= \op\sum_{r=0}^m \frac1{r!}\left[
   \op\sum_{p=0}^N\frac{1}{p!}\frac{\dr^pf_{j_1\ldots j_r}}{\dr
z^{i_1}\cdots \dr z^{i_p}}(\si(x))s(x^{i_1})\cdots
s(x^{i_p})\right]y^{j_1}\cdots y^{j_r}, \label{+41}
\eeq
where $f_{j_1\ldots j_r}$ are real functions \cite{batch2,dewt}.

Superfunctions of the above three types are called smooth
superfunctions. The fourth type of superfunctions is the
following.

Given the sheaf $\ccG\cH_{N'}$ of $GH^\infty$-superfunctions on a
supervector space $B^{n,m}$, let us define the sheaf of graded
commutative $\La$-algebras
\mar{+21}\beq
\ccG_{N'}=\ccG\cH_{N'}\op\ot_{\La'} \La, \label{+21}
\eeq
where $\La$ is regarded as a graded commutative $\La'$-algebra.
The sheaf $\ccG_{N'}$ (\ref{+21}) possesses the following
important properties \cite{bart}.

$\bullet$ There is the evaluation morphism
\mar{+22}\beq
\dl:\ccG_{N'}\ni F\ot a\mapsto Fa \in C^\La_{B^{n,m}}, \label{+22}\\
\eeq
where
\be
C^\La_{B^{n,m}}\cong C^0_{B^{n,m}}\ot\La
\ee
is the sheaf of continuous $\La$-valued functions on $B^{n,m}$.
Its image is isomorphic to the sheaf $\ccG^\infty$ of
$G^\infty$-superfunctions on $B^{n,m}$.

$\bullet$ For any two integers $N'$ and $N''$ satisfying the
condition (\ref{+20}), there exists the canonical isomorphism
between the sheaves $\ccG_{N'}$ and $\ccG_{N''}$. Therefore, one
can define the canonical sheaf $\ccG_{n,m}$ of graded commutative
$\La$-algebras on the supervector space $B^{n,m}$ whose sections
can be seen as tensor products $F\ot a$ of
$H^\infty$-superfunctions $F$ (\ref{+41}) and elements $a\in\La$.
They are called $G$-superfunctions.

$\bullet$ The sheaf $\gd\ccG_{n,m}$ of graded derivations of the
sheaf $\ccG_{n,m}$ is a locally free sheaf of $\ccG_{n,m}$-modules
of rank $(n,m)$. On any open set $U\subset B^{n,m}$, the
$\ccG_{n,m}(U)$-module $\gd\ccG_{n,m}(U)$ is generated by the
derivations $\dr/\dr x^i$, $\dr/\dr y^j$ which act on
$\ccG_{n,m}(U)$ by the rule
\mar{+83}\beq
\frac{\dr}{\dr x^i}(F\ot a)=\frac{\dr F}{\dr x^i}\ot a, \qquad
\frac{\dr}{\dr y^j}(F\ot a)=\frac{\dr F}{\dr y^j}\ot a.
\label{+83}
\eeq

These properties of $G$-superfunctions make $G$-supermanifolds
most suitable for differential geometric constructions.

\section{Supermanifolds}

A paracompact topological space $M$ is said to be an
$(n,m)$-dimensional smooth supermanifold if it admits an atlas
\be
\Psi=\{U_\zeta,\f_\zeta\}, \qquad \f_\zeta: U_\zeta\to B^{n,m},
\ee
such that the transition functions $\f_\zeta\circ\f_\xi^{-1}$ are
supersmooth. Obviously, a smooth supermanifold of dimension
$(n,m)$ is also a real smooth manifold of dimension
$2^{N-1}(n+m)$. If transition superfunctions are $H^\infty$-,
$G^\infty$- or $GH^\infty$-superfunctions, one deals with
$H^\infty$-, $G^\infty$- or $GH^\infty$-supermanifolds,
respectively. This definition is equivalent to the following one.

\begin{definition} \label{+27} \mar{+27}
A smooth supermanifold is a graded local-ringed space $(M,\gS)$
which is locally isomorphic to $(B^{n,m},\cS)$, where $\cS$ is one
of the sheaves of smooth superfunctions on $B^{n,m}$. The sheaf
$\cS$ is called the structure sheaf of a smooth supermanifold.
\end{definition}

In accordance with Definition \ref{+27}, by a morphism of smooth
supermanifolds is meant their morphism $(\vf,\Phi)$ as graded
local-ringed spaces, where $\Phi$ is an even graded morphism. In
particular, every morphism $\vf: M\to M'$ yields the smooth
supermanifold morphism $(\vf,\Phi=\vf^*)$.

Smooth supermanifolds however are effected by serious
inconsistencies as follows. Since odd derivatives of
$G^\infty$-superfunctions are ill defined, the sheaf of
derivations of the sheaf of $G^\infty$-superfunctions is not
locally free. Nevertheless, any $G$-supermanifold has an
underlying $G^\infty$-supermanifold.

In the case of $GH^\infty$-supermanifolds (including
$H^\infty$-ones), spaces of values of $GH^\infty$-superfunctions
at different points are not mutually isomorphic because the
Grassmann algebra $\La$ is not a free module with respect to its
subalgebra $\La'$. By these reasons, we turn to
$G$-supermanifolds. Their definition repeats Definition \ref{+27}.

\begin{definition} \label{+30} \mar{+30}
An $(n,m)$-dimensional $G$-supermanifold is a graded local-ringed
space $(M,G_M)$, satisfying the following conditions:

$\bullet$ $M$ is a paracompact topological space;

$\bullet$ $(M,G_M)$ is locally isomorphic to
$(B^{n,m},\ccG_{n,m})$;

$\bullet$ there exists a morphism of sheaves of graded commutative
$\La$-algebras $\dl:G_M\to C^\La_M$, where
\be
C^\La_M\cong C^0_M\ot\La
\ee
is sheaf of continuous $\La$-valued functions on $M$,  and $\dl$
is locally isomorphic to the evaluation morphism (\ref{+22}).
\end{definition}

\begin{example} \label{+50} \mar{+50}
The triple $(B^{n,m},\ccG_{n,m},\dl)$, where $\dl$ is the
evaluation morphism (\ref{+22}),  is called the standard
$G$-supermanifold. For any open subset $U\subset B^{n,m}$, the
space $\ccG_{n,m}(U)$ can be provided with the topology which
makes it into a graded Fr\'echet algebra. Then there are
isometrical isomorphisms
\mar{+65}\ben
&& \ccG_{n,m}(U)\cong \cH^\infty(U)\ot \La\cong
C^\infty(\si^{n,m}(U))\ot\La\ot\w
\Bbb R^m \cong \label{+65} \\
&&\qquad C^\infty(\si^{n,m}(U))\ot\w\Bbb R^{N+m}. \nonumber
\een
\end{example}

\begin{remark} \label{+81} \mar{+81}
Any $GH^\infty$-supermanifold $(M,GH^\infty_M)$ with the structure
sheaf $GH^\infty_M$ is naturally extended to the $G$-supermanifold
$(M,GH^\infty_M\ot \La)$. Every $G$-supermanifold defines an
underlying $G^\infty$-supermanifold $(M,\dl(G_M))$, where
$\dl(G_M)=G^\infty_M$ is the sheaf of $G^\infty$-superfunctions on
$M$.
\end{remark}

As in the case of smooth supermanifolds, the underlying space $M$
of a $G$-supermanifold $(M,G_M)$ is provided with the structure of
a real smooth manifold of dimension $2^{N-1}(n+m)$, and morphisms
of $G$-supermanifolds are smooth morphisms of the underlying
smooth manifolds. However, it may happen that non-isomorphic
$G$-supermanifold have isomorphic underlying smooth manifolds.

\begin{remark} \label{+51} \mar{+51}
Let us present briefly the axiomatic approach to supermanifolds
which enables one to obtain all the previously known types of
supermanifolds in terms of $R^\infty$-supermanifolds
\cite{bart,bart93,bruz99}. This approach to supermanifolds refines
that in \cite{roth}. The $R^\infty$-supermanifolds are introduced
over the above mentioned Arens--Michael algebras of Grassmann
origin \cite{bruz99}, but we here omit the topological side of
their definition, though just the topological properties differ
$R^\infty$-supermanifolds from $R$-supermanifolds in \cite{roth}.

Let $\La$ be a graded commutative algebra of the above mentioned
type (for the sake of simplicity, the reader can think of $\La$ as
being a Grassmann algebra). A superspace over $\La$ is a triple
$(M,\gR^\infty, \dl)$, where $M$ is a paracompact topological
space, $\gR^\infty$ is a sheaf of graded commutative
$\La$-algebras, and $\delta: \gR^\infty\to C^\La_M$ is an
evaluation morphism to the sheaf $C^\La_M$ of continuous
$\La$-valued functions on $M$. Sections of $\gR^\infty$ are called
$R^\infty$-superfunctions. Let $\cM_q$ denote the ideal of the
stalk $\gR_q^\infty$, $q\in M$, formed by the germs of
$R^\infty$-superfunctions $f$ vanishing at a point $q$, i.e., such
that $\dl(f)(q)=0$. An $R^\infty$-supermanifold of dimension
$(n,m)$ is a superspace $(M,\gR^\infty, \dl)$ satisfying the
following four axioms \cite{bruz99}.

\noindent {\bf Axiom 1.} The graded $\gR^\infty$-dual
$(\gd\gR^\infty)^*$ of the sheaf of derivations is a locally free
sheaf of graded $\gR^\infty$-modules of rank $(n,m)$. Every point
$q\in M$ has an open neighborhood $U$ with sections
$x^1,\cdots,x^n\in \gR^\infty(U)_0$, $y^1,\cdots,y^m\in
\gR^\infty(U)_1$ such that $\{dx^i,dy^j\}$ is a graded basis for
$(\gd\gR^\infty)^*(U)$ over $\gR^\infty(U)$.

\noindent {\bf Axiom 2.} Given the above mentioned coordinate
chart, the assignment
\be
q\to (\dl(x^i), \dl(y^j))
\ee
determines a homeomorphism of $U$ onto an open subset of
$B^{n,m}$.

\noindent {\bf Axiom 3.} For every $q\in M$, the ideal $\cM_q$ is
finitely generated.

\noindent {\bf Axiom 4.} For every open subset $U\subset M$, the
topological algebra $\gR^\infty(U)$ is Hausdorff and complete.

An $R$-supermanifold over a graded commutative Banach algebra
satisfying Axiom 4 is an $R^\infty$-supermanifold. The standard
$G$-supermanifold in Example \ref{+50} is an
$R^\infty$-supermanifold. Moreover, in the case of a finite
Grassmann algebra $\La$, the category of $R^\infty$-supermanifolds
and the category of $G$-supermanifolds are equivalent.
\end{remark}

Let $(M,G_M)$ be a $G$-supermanifold. As was mentioned above, it
satisfies the Axioms 1-4. Sections $u$ of the sheaf $\gd G_M$ of
graded derivations are called supervector fields on the
$G$-supermanifold $(M,G_M)$, while sections $\f$ of the dual sheaf
$\gd G_M^*$ are one-superforms on $(M,G_M)$. Given a coordinate
chart $(q^i)=(x^i,y^j)$ on $U\subset M$, supervector fields and
one-superforms read
\be
u=u^i\dr_i, \qquad \f=\f_idq^i,
\ee
where coefficients $u^i$ and $\f_i$ are $G$-superfunctions on
$U$. The graded differential calculus in supervector fields and
superforms obeys the standard formulae (\ref{ws14}), (\ref{ws45}),
(\ref{ws44}) and (\ref{ws46}).

Let us consider cohomology of $G$-supermanifolds. Given a
$G$-supermanifold $(M,G_M)$, let
\be
\gO^k_{\La M}=\gO^k_M\ot\La
\ee
be the sheaves of smooth $\La$-valued exterior forms on $M$. These
sheaves are fine, and they constitute the fine resolution
\be
0\to\La\to C^\infty_M\ot\La \to \gO^1_M\ot\La\to\cdots
\ee
of the constant sheaf $\La$ on $M$. We have the corresponding de
Rham complex
\be
0\to\La\to C^\infty_\La(M) \to \cO^1_\La(M)\to\cdots
\ee
of $\La$-valued exterior forms on $M$. By virtue of the abstract
de Rham theorem \cite{book09,sard09}, the cohomology $H^*_\La(M)$
of this complex is isomorphic to the sheaf cohomology $H^*(M;\La)$
of $M$ with coefficients in the constant sheaf $\La$ and,
consequently, is related to the de Rham cohomology as follows:
\mar{+146}\beq
H^*_\La(M)=H^*(M;\La)=H^*(M)\ot\La. \label{+146}
\eeq
Thus, the cohomology groups of $\La$-valued exterior forms do not
provide us with information on the $G$-supermanifold structure of
$M$.

Let us turn to cohomology of superforms on a $G$-supermanifold
$(M,G_M)$. The sheaves $\op\w^k\gd G_M^*$ of superforms constitute
the complex
\mar{+141}\beq
0\to\La\to G_M\to \gd^*G_M\to \cdots. \label{+141}
\eeq
The Poincar\'e lemma for superforms is proved to hold
\cite{bart,bruz88}, and this complex is exact. However, the
structure sheaf $G_M$ need not be acyclic, and the exact sequence
(\ref{+141}) fails to be a resolution of the constant sheaf $\La$
on $M$ in general. Therefore, the cohomology $H^*_S(M)$ of the de
Rham complex of superforms are not equal to cohomology
$H^*(M;\La)$ of $M$ with coefficients in the constant sheaf $\La$,
and need not be related to the de Rham cohomology $H^*(M)$ of the
smooth manifold $M$. In particular, cohomology $H^*_S(M)$ is not a
topological invariant, but it is invariant under $G$-isomorphisms
of $G$-supermanifolds.

\begin{prop} \label{+140} \mar{+140}  The structure sheaf
$\ccG_{n,m}$ of the standard $G$-supermanifold
$(B^{n,m},\ccG_{n,m})$ is acyclic, i.e.,
\be
H^{k>0}(B^{n,m};\ccG_{n,m})=0.
\ee
\end{prop}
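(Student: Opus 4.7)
The plan is to push forward along the body map $\sigma=\sigma^{n,m}\colon B^{n,m}\to\Bbb R^n$ and apply a Leray spectral sequence argument. Topologically $B^{n,m}\cong\Bbb R^n\times K$, where $K$ is the ``soul'' vector space of dimension $2^{N-1}(n+m)-n$, so $\sigma$ is a trivial fibre bundle with contractible fibre, and we should expect the cohomology to reduce to that of a fine sheaf on the body.

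First I would identify the direct image $\sigma_\ast\ccG_{n,m}$ on $\Bbb R^n$. By the isomorphism (\ref{+65}), for any open $V\subset\Bbb R^n$,
\[
\sigma_\ast\ccG_{n,m}(V)=\ccG_{n,m}(\sigma^{-1}(V))\cong C^\infty(V)\otimes\w\Bbb R^{N+m},
\]
so $\sigma_\ast\ccG_{n,m}\cong C^\infty_{\Bbb R^n}\otimes\w\Bbb R^{N+m}$ as sheaves on $\Bbb R^n$. This is a sheaf of modules over the fine sheaf $C^\infty_{\Bbb R^n}$, hence fine itself (smooth partitions of unity on $\Bbb R^n$ supply the required sheaf endomorphisms), so $H^{p>0}(\Bbb R^n;\sigma_\ast\ccG_{n,m})=0$. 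Next I would verify the vanishing of the higher direct images $R^q\sigma_\ast\ccG_{n,m}=0$ for $q>0$. Their stalks at $x\in\Bbb R^n$ are the direct limits over neighbourhoods $V\ni x$ of $H^q(\sigma^{-1}(V);\ccG_{n,m})$; for a small ball $V$ the preimage $\sigma^{-1}(V)\cong V\times K$ is a product with contractible fibre, and the identification (\ref{+65}) realises $\ccG_{n,m}|_{\sigma^{-1}(V)}$ as the inverse image of $\sigma_\ast\ccG_{n,m}|_V$ under the projection $V\times K\to V$. A Vietoris--Begle / contracting-homotopy argument, using the linear contraction of $K$ to a point, then yields the vanishing of $H^{q>0}(\sigma^{-1}(V);\ccG_{n,m})$.

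Combining these two inputs, the Leray spectral sequence
\[
E_2^{pq}=H^p(\Bbb R^n;R^q\sigma_\ast\ccG_{n,m})\Rightarrow H^{p+q}(B^{n,m};\ccG_{n,m})
\]
collapses to $H^k(B^{n,m};\ccG_{n,m})\cong H^k(\Bbb R^n;\sigma_\ast\ccG_{n,m})=0$ for $k>0$, as required. The main obstacle is the vanishing of the higher direct images: the identification of $\ccG_{n,m}$ on non-saturated opens with a genuine pullback sheaf requires care, since sections of $\ccG_{n,m}$ are not arbitrary smooth functions on $B^{n,m}$ but are built from formal Taylor series along the fibres as in (\ref{+41}). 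The fibre-acyclicity must therefore be extracted from that Taylor-series structure rather than from ordinary smoothness on $B^{n,m}$, which is where the most delicate part of the argument lies.
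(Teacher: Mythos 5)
Your proposal is correct and follows essentially the same route as the paper: the paper gives no detailed argument, saying only that the proof rests on the isomorphism (\ref{+65}) together with cohomological constructions from the cited literature, and your pushforward--Leray argument (fineness of $\si_*\ccG_{n,m}\cong C^\infty_{\Bbb R^n}\ot\w\Bbb R^{N+m}$ on the body, plus vanishing of the higher direct images via the contractible fibres of the body map) is precisely the standard implementation of that sketch. The delicate point you flag --- realizing $\ccG_{n,m}$ as the topological inverse image $\si^{-1}(C^\infty_{\Bbb R^n}\ot\w\Bbb R^{N+m})$ from the Taylor-series structure of $G$-superfunctions --- is exactly what the isomorphism (\ref{+65}) encodes and what the references cited in the paper supply.
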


The proof is based on the isomorphism (\ref{+65}) and some
cohomological constructions \cite{bart,bruz99}.

\section{DeWitt supermanifolds}

There exists a particular class of supermanifolds, called DeWitt
supermanifolds. Their notion implies that a supervector space
$B^{n,m}$ is provided with the DeWitt topology, which is coarser
than the Euclidean one. This is the coarsest topology such that
the body map
\mar{w1002}\beq
\si^{n,m}:B^{n,m}\to\Bbb R^n \label{w1002}
\eeq
is continuous. The open sets in the DeWitt topology are of the
form $V\times \cR^{n,m}$, where $V$ are open sets in $\Bbb R^n$.
Clearly, this topology is not Hausdorff.

\begin{definition} \label{+31} \mar{+31}
A smooth supermanifold (resp. a $G$-supermanifold) is said to be a
DeWitt supermanifold if it admits an atlas such that the local
morphisms $\f_\zeta: U_\zeta\to B^{n,m}$ in Definition \ref{+27}
(resp. Definition \ref{+30}) are continuous with respect to the
DeWitt topology, i.e., $\f_\zeta(U_\zeta)\subset B^{n,m}$ are open
in this topology.
\end{definition}

Given an atlas $(U_\zeta,\f_\zeta)$ of a DeWitt supermanifold in
accordance with Definition \ref{+31}, it is readily observed that
its transition functions $\f_\zeta\circ\f_\xi^{-1}$ must preserve
the fibration $\si^{n,m}$ (\ref{w1002}) whose fibre
$(\si^{n,m})^{-1}(z)$ over $z\in \Bbb R^n$ is equipped with the
coarsest topology, where only $\emptyset$ and
$(\si^{n,m})^{-1}(z)$ are open sets. This fact leads to the
following.

Every DeWitt supermanifold is a locally trivial topological fibre
bundle $\si_M: M\to Z_M$ over an $n$-dimensional smooth manifold
$Z_M$ with the typical fibre $R^{n,m}=\si(B^{n,m})$. The base
$Z_M$ of this fibre bundle is said to be a body manifold of a
DeWitt supermanifold, while the surjection $\si_M$ is called a
body map.

There is the above mentioned correspondence between the graded
manifolds and the DeWitt $H^\infty$-supermanifolds. It is based on
the following facts.

(i) The structure sheaf $\gA$ of a graded manifold $(Z,\gA)$ is
locally isomorphic to the sheaf $C^\infty_U\ot\w\Bbb R^m$.

(ii) Given a DeWitt $H^\infty$-supermanifold $(M,H^\infty_M)$, the
direct image $\si_*(H^\infty_M)$ of its structure sheaf onto the
body manifold $Z_M$ is locally isomorphic to the sheaf
$C^\infty_U\ot \w\Bbb R^m$. The expression (\ref{+41}) provides
this isomorphism in an explicit form.

(iii) The graded local-ringed spaces $(M,H^\infty_M)$ and
$(Z_M,\si_*(H^\infty_M))$ determine the same element of the
cohomology set $H^1(Z_M;\Is(\w\Bbb R^m)^\infty)$.

Thus, we come to the following statement \cite{bart,batch2}.

\begin{theorem}\label{+44} \mar{+44}
Given a DeWitt $H^\infty$-supermanifold $(M,H^\infty_M)$, the
associated pair $(Z_M,\si_*(H^\infty_M))$ is a graded manifold.
Conversely, for any graded manifold $(Z,\gA)$, there exists a
DeWitt $H^\infty$-supermanifold, whose body manifold is $Z$ and
whose structure sheaf $\gA$ is isomorphic to $\si_*(H^\infty_M)$.
\end{theorem}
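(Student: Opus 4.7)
The plan is to reduce both directions to the same cohomological classification, using the three facts (i)--(iii) listed immediately before the theorem as the bridge. The proof will proceed sheaf-theoretically, reducing everything to how local models on splitting neighborhoods are glued.

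First I would treat the forward direction. Given a DeWitt $H^\infty$-supermanifold $(M,H^\infty_M)$, the underlying structure is a locally trivial topological fibre bundle $\si_M:M\to Z_M$ with typical fibre $\cR^{n,m}=\si(B^{n,m})$, since transition functions of a DeWitt atlas must preserve the fibration $\si^{n,m}$ (each fibre carries the coarsest topology). For every chart $U\subset Z_M$ over which this bundle trivializes, the space of $H^\infty$-superfunctions on $\si_M^{-1}(U)$ is given by the explicit Taylor-type formula (\ref{+41}); the key observation is that each such superfunction is determined by the finite collection of smooth coefficient functions $f_{j_1\ldots j_r}$ on $U$, so $\si_*(H^\infty_M)(U)\cong C^\infty(U)\ot\w\Bbb R^m$ canonically. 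This yields the local isomorphism in (ii), from which one extracts the body map as the projection onto the degree-zero component; this supplies the exact sequence (\ref{cmp140}) and identifies $\cR/\cR^2$ locally with the free $C^\infty_U$-module $\Bbb R^m\ot C^\infty_U$. Hence $(Z_M,\si_*(H^\infty_M))$ satisfies both axioms of a graded manifold.

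For the converse direction I would start from a graded manifold $(Z,\gA)$ and apply Batchelor's theorem (Theorem \ref{lmp1}) to write $\gA\cong\gA_E=\w E^*$ for some rank-$m$ vector bundle $E\to Z$. Then I would construct a candidate DeWitt $H^\infty$-supermanifold $M$ by gluing the local models $\si^{-1}(U)\subset B^{n,m}$ corresponding to trivializing charts of $E$, using the transition cocycle of $E$ (viewed through the assignment $\g^a_b\mapsto c'^a=\g^a_b c^b$ extended to an $H^\infty$-superfunction change of odd coordinates via (\ref{+41})). The point (iii) is essential here: both Batchelor's cocycle for $\gA$ and the cocycle for the candidate $(M,H^\infty_M)$ live in $H^1(Z;\Is(\w\Bbb R^m)^\infty)$, and they are constructed from the same $GL(m,\Bbb R)$-cocycle of $E$ via the same bijection $H^1(Z;GL(m,\Bbb R)^\infty)\to H^1(Z;\Is(\w\Bbb R^m)^\infty)$ used in the proof of Batchelor's theorem. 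This matches $\gA$ with $\si_*(H^\infty_M)$ up to isomorphism and yields the required DeWitt $H^\infty$-supermanifold with body $Z$.

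The main obstacle I anticipate is point (iii): one must check that the cocycle computed for $\si_*(H^\infty_M)$ out of $H^\infty$-transition functions coincides, as a class in $H^1(Z_M;\Is(\w\Bbb R^m)^\infty)$, with the cocycle determined by $\gA$. Concretely, the even coordinates of a DeWitt $H^\infty$ chart carry a nontrivial soul part, so the transition formulas of $H^\infty$-superfunctions involve the Taylor expansion (\ref{+14}); one has to verify that, after applying $\si_*$, this Taylor machinery collapses precisely to the linear-in-$c^a$ gluing rule (\ref{+6}) augmented by the exterior-algebra action, giving the same sheaf automorphisms of $\w\Bbb R^m$. Once this identification of cocycles is in place, both directions of the theorem reduce to the single statement that a graded manifold and a DeWitt $H^\infty$-supermanifold with matching cocycle classes in $H^1(Z_M;\Is(\w\Bbb R^m)^\infty)$ are mutually determined, and uniqueness/existence follow from the standard classification of locally-ringed spaces by \v Cech cohomology of their structural automorphism sheaves.
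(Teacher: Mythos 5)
Your proposal is correct and follows essentially the same route as the paper, which bases the theorem precisely on the facts (i)--(iii) stated before it: the local isomorphisms of $\si_*(H^\infty_M)$ and $\gA$ with $C^\infty_U\ot\w\Bbb R^m$ (via the expression (\ref{+41})), Batchelor's Theorem \ref{lmp1} for the converse, and the identification of the two structures through the same class in $H^1(Z_M;\Is(\w\Bbb R^m)^\infty)$. You merely spell out the cocycle-matching step that the paper delegates to the cited literature.
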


Then by virtue of Batchelor's Theorem \ref{lmp1} and Theorem
\ref{+44}, there is one-to-one correspondence between the
isomorphism classes of DeWitt $H^\infty$-supermanifolds of odd
rank $m$ with a body manifold $Z$ and the equivalence classes of
$m$-dimensional vector bundles over $Z$. This result is extended
to DeWitt $GH^\infty$-, $G^\infty$- and $G$-supermanifolds because
their isomorphism classes
   are in one-to-one correspondence with isomorphism classes of DeWitt
$H^\infty$-supermanifolds \cite{bart}.

Let us say something more on DeWitt $G$-supermanifolds.

\begin{prop} \label{+145} \mar{+145} The structure sheaf
$G_M$ of a DeWitt $G$-supermanifold is acyclic, and so is any
locally free sheaf of graded $G_M$-modules  \cite{bart,bruz99}.
\end{prop}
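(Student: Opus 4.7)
The plan is to reduce the sheaf cohomology of $G_M$ on $M$ (equipped with the DeWitt topology) to sheaf cohomology on the body manifold $Z_M$ via the body map $\si_M$, and then exploit the fineness of the direct image.

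First, I would use the description of the DeWitt topology given just before Definition \ref{+31}: the open sets of $M$ are exactly the preimages $\si_M^{-1}(V)$ of open sets $V\subset Z_M$, with different opens having different preimages. This yields a bijection between the open-set lattices of $M$ and $Z_M$, so the functors $\si_{M*}$ and $\si_M^{-1}$ are mutually inverse equivalences between the categories of sheaves of abelian groups on $M$ and on $Z_M$. In particular, $\si_{M*}$ is exact and preserves injectives, giving a natural isomorphism
$$H^k(M;\cF)\cong H^k(Z_M;\si_{M*}\cF),\qquad k\ge 0,$$
for every sheaf $\cF$ on $M$.

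Second, I would identify $\si_{M*}G_M$ locally on $Z_M$. Taking an open set $V\subset Z_M$ over which the fibre bundle $\si_M$ trivializes, the DeWitt condition in Definition \ref{+31} makes $\si_M^{-1}(V)$ $G$-isomorphic to an open subset of the standard DeWitt $G$-supermanifold, and the isomorphism (\ref{+65}) of Example \ref{+50} then yields
$$(\si_{M*}G_M)(V)\cong C^\infty(V)\ot\w\Bbb R^{N+m}.$$
These local isomorphisms glue to present $\si_{M*}G_M$ as a sheaf of $C^\infty_{Z_M}$-modules on the paracompact smooth manifold $Z_M$, the sheaf $C^\infty_{Z_M}$ acting through its canonical inclusion in $\si_{M*}G_M$.

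Third, on a paracompact smooth manifold any sheaf of $C^\infty_{Z_M}$-modules is fine, because smooth partitions of unity subordinate to any open cover multiply sections and produce endomorphisms that partition the identity. Fine sheaves on paracompact spaces are acyclic, so $H^{k>0}(Z_M;\si_{M*}G_M)=0$ and hence, by the first step, $H^{k>0}(M;G_M)=0$. For the second assertion, if $\cF$ is a locally free sheaf of graded $G_M$-modules of rank $(p,q)$, then $\si_{M*}\cF$ is locally isomorphic on $Z_M$ to $(\si_{M*}G_M)^{p+q}$, hence inherits a $C^\infty_{Z_M}$-module structure and the same fineness argument applies. The main obstacle is the careful verification of the topological claim in step one that every open of $M$ is a pullback from $Z_M$; once the equivalence of sheaf categories is secured, the remainder is standard bookkeeping with fine sheaves on paracompact manifolds.
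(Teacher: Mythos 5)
There is a genuine gap, and it sits exactly where you flagged it: your step one confuses the auxiliary DeWitt topology with the topology on which the sheaf $G_M$ and its cohomology actually live. A DeWitt $G$-supermanifold is, first of all, a $G$-supermanifold in the sense of Definition \ref{+30}: $M$ is a paracompact topological space, in fact a real manifold of dimension $2^{N-1}(n+m)$, and $G_M$ is a sheaf on $M$ with this (fine, Euclidean-type) topology; the DeWitt condition of Definition \ref{+31} only requires an atlas whose chart images are DeWitt-open, it does not replace the underlying topology. Consequently the open sets of $M$ are by no means exhausted by the tubes $\si_M^{-1}(V)$ (indeed $\si_M:M\to Z_M$ is a locally trivial bundle with Euclidean, contractible fibres $R^{n,m}$, which is precisely what the paper uses later to get $H^*(M)=H^*(Z_M)$), so there is no equivalence of open-set lattices and no formal isomorphism $H^k(M;\cF)\cong H^k(Z_M;\si_{M*}\cF)$. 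If instead you insisted on computing cohomology in the DeWitt topology, your lattice argument would be valid, but you would then be computing a different invariant from the one asserted in Proposition \ref{+145}; the paper's own caveat that fineness of $G_M$ in the DeWitt topology does not yield acyclicity, because that topology is not paracompact, is a warning that this coarser topology cannot simply be substituted for the true one.

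The missing ingredient is the one the paper points to: Proposition \ref{+140}. To pass from $H^*(M;G_M)$ (usual topology) to $H^*(Z_M;\si_{M*}G_M)$ one needs the higher direct images of $G_M$ under $\si_M$ to vanish, i.e. acyclicity of $G_M$ on the tubes $\si_M^{-1}(V)$, which over a trivializing $V$ are $G$-isomorphic to DeWitt-open subsets of the standard $G$-supermanifold; this is supplied by the isomorphism (\ref{+65}) together with Proposition \ref{+140}, and then a Leray-type argument gives $H^{k}(M;G_M)\cong H^{k}(Z_M;\si_{M*}G_M)$. Your second and third steps are then in line with the paper: $\si_{M*}G_M$ is fine (a sheaf of $C^\infty_{Z_M}$-modules, up to a Batchelor-type identification) on the paracompact body $Z_M$, hence acyclic, and a locally free sheaf of graded $G_M$-modules is handled by the same argument since its direct image is locally a finite sum of copies of $\si_{M*}G_M$. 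So the back end of your proposal is sound; the front end needs to be replaced by the higher-direct-image (Proposition \ref{+140}) argument rather than an equivalence of topologies.
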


\begin{prop} \label{1035} \mar{1035} There is an isomorphism of
the cohomology $H^*_S(M)$ of superforms on a DeWitt
$G$-supermanifold to the cohomology (\ref{+146}) of $\La$-valued
exterior forms on its body manifold $Z_M$, i.e.,
$H^*_S(M)=H^*(Z_M)\ot\La$ \cite{rab}.
\end{prop}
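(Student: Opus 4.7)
The plan is to combine the abstract de Rham theorem with the special topological structure of a DeWitt $G$-supermanifold. First I would invoke the Poincar\'e lemma for superforms, already noted above after (\ref{+141}), to conclude that the complex (\ref{+141}) is an exact sequence of sheaves on $M$, i.e., a resolution of the constant sheaf $\La$. Next I would appeal to Proposition \ref{+145}: since each $\op\w^k\gd G_M^*$ is a locally free sheaf of graded $G_M$-modules, it is acyclic with respect to the DeWitt topology of $M$. Consequently (\ref{+141}) is an acyclic resolution of $\La$, and the abstract de Rham theorem (in the same spirit as the application leading to (\ref{+146})) yields $H^*_S(M)=H^*(M;\La)$, where the sheaf cohomology is computed on $M$ equipped with its DeWitt topology.

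Second, I would transport this computation to the body manifold $Z_M$. By Definition \ref{+31} a DeWitt atlas has charts of the form $V\times \cR^{n,m}$ with $V$ open in $Z_M$, and the DeWitt topology is precisely the coarsest one making the body map $\si_M:M\to Z_M$ continuous. A direct verification shows that the open subsets of $M$ in the DeWitt topology are exactly the preimages $\si_M^{-1}(W)$ of open subsets $W\subset Z_M$, so $\si_M$ induces a bijection between the open-set lattices. Therefore $\si_M^{-1}$ and $(\si_M)_*$ constitute mutually inverse equivalences between the categories of sheaves of abelian groups on $M$ and on $Z_M$; in particular $(\si_M)_*\La_M=\La_{Z_M}$ and the higher direct images vanish, so $H^*(M;\La)=H^*(Z_M;\La)$.

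Finally, $Z_M$ is an ordinary paracompact smooth manifold and $\La$ is a finite-dimensional real vector space, so $H^*(Z_M;\La)=H^*(Z_M)\ot\La$ by the very argument that yields (\ref{+146}) on a smooth manifold. Chaining the three isomorphisms produces the desired identity $H^*_S(M)=H^*(Z_M)\ot\La$.

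The main obstacle will be the second step, namely the precise check that the DeWitt topology on $M$ coincides with the pullback topology along $\si_M$, which is what upgrades the body map from a mere continuous surjection to an equivalence of sheaf categories. This hinges on the fibration property of DeWitt charts and on the coarsest-topology clause singled out in Definition \ref{+31}; once this is established, the acyclicity from Proposition \ref{+145} together with the Poincar\'e lemma for superforms closes the argument.
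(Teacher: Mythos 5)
Your proposal is correct in outline and shares its first half with the paper (Poincar\'e lemma for superforms plus acyclicity plus the abstract de Rham theorem), but its second half takes a genuinely different route. The paper stays with the ordinary Euclidean topology on $M$: acyclicity of the sheaves of superforms (Proposition \ref{+145}) makes (\ref{+141}) an acyclic resolution of $\La$, whence $H^*_S(M)=H^*(M;\La)=H^*(M)\ot\La$ exactly as in (\ref{+146}); the passage to the body is then purely topological, namely the fibre bundle $\si_M:M\to Z_M$ has contractible typical fibre $R^{n,m}$, so $H^*(M)=H^*(Z_M)$ by homotopy invariance. You instead transport the whole computation through the DeWitt topology, using the observation that its open sets are precisely the saturated sets $\si_M^{-1}(W)$, so that sheaf cohomology on $(M,{\rm DeWitt})$ coincides with sheaf cohomology on $Z_M$; this is essentially Rabin's original argument and is in fact closer to how the paper itself justifies Proposition \ref{+145}, via the fine and acyclic direct image $\si_*(G_M)$ on the paracompact body. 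Your route avoids any appeal to homotopy invariance of the de Rham cohomology of the $2^{N-1}(n+m)$-dimensional manifold $M$; the paper's route keeps all sheaf-theoretic statements in the standard paracompact setting, where fineness immediately gives acyclicity.

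The one point you must make explicit is the switch of topologies between your first and second steps. Exactness of (\ref{+141}) as a complex of sheaves in the Euclidean topology (stalkwise at points of $M$) does not by itself yield exactness in the coarser DeWitt topology, whose stalks are germs along entire fibres $\si_M^{-1}(z)$; to run the abstract de Rham theorem on $(M,{\rm DeWitt})$ (equivalently, to have exactness of the direct image complex on $Z_M$) you need the Poincar\'e lemma on saturated domains, i.e.\ that a closed superform on $\si_M^{-1}(V)$, $V$ convex, is exact there. This stronger statement does hold, since the standard homotopy operator acts algebraically in the soul and odd directions, but it is not literally what your first sentence establishes. Similarly, the acyclicity you quote from Proposition \ref{+145} should be read through the direct image: $(\si_M)_*$ of a locally free sheaf of $G_M$-modules is a fine sheaf on the paracompact manifold $Z_M$, hence acyclic, which is exactly what cohomology in the DeWitt topology computes. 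Finally, your identification $(\si_M)_*\La_M=\La_{Z_M}$ is justified because each fibre carries the indiscrete topology, hence is connected, so DeWitt-locally constant functions are constant along fibres. With these points spelled out, the chain $H^*_S(M)=H^*((M,{\rm DeWitt});\La)=H^*(Z_M;\La)=H^*(Z_M)\ot\La$ is sound.
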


These results are based on the fact that the structure sheaf $G_M$
on $M$, provided with the DeWitt topology, is fine. However, this
does not imply automatically that $G_M$ is acyclic since the
DeWitt topology is not paracompact. Nevertheless, it follows that
the image $\si_*(G_M)$ of $G_M$ on the body manifold $Z_M$ is fine
and acyclic. Then Proposition \ref{+140} lead to Proposition
\ref{+145}. In particular, the sheaves of superforms on a DeWitt
$G$-supermanifold are acyclic. Then they constitute the resolution
of the constant sheaf $\La$ on $M$, and we obtain isomorphisms
\be
H^*_S(M)= H^*(M;\La)=H^*(M)\ot\La.
\ee
Since the typical fibre of the fibre bundle $M\to Z_M$ is
contractible, then $H^*(M)=H^*(Z_M)$ such that the isomorphism in
Proposition \ref{1035} takes place.

\section{Supervector bundles}

Supervector bundles are considered in the category of
$G$-supermanifolds. We start with the definition of the product of
two $G$-supermanifolds seen as a trivial supervector bundle.

Let $(B^{n,m},\ccG_{n,m})$ and $(B^{r,s},\ccG_{r,s})$ be two
standard $G$-supermanifolds in Example \ref{+30}. Given open sets
$U\subset B^{n,m}$ and $V\subset B^{r,s}$, we consider the
presheaf
\mar{+67}\beq
U\times V\to \ccG_{n,m}(U)\wh\ot \ccG_{r,s}(V), \label{+67}
\eeq
where $\wh\ot$ denotes the tensor product of modules completed in
Grothendieck's topology. Due to the isomorphism (\ref{+65}), it is
readily observed that the structure sheaf $\ccG_{n+r,m+s}$ of the
standard $G$-supermanifold on $B^{n+r,m+s}$ is isomorphic to that,
defined by the presheaf (\ref{+67}). This construction is
generalized to arbitrary $G$-supermanifolds as follows.

Let  $(M,G_M)$ and $(M',G_{M'})$ be two $G$-supermanifolds of
dimensions $(n,m)$ and $(r,s)$, respectively. Their product
\be
(M,G_M) \times(M',G_{M'})
\ee
is defined as the graded local-ringed space $(M\times M',
G_M\wh\ot G_{M'})$, where $G_M\wh\ot G_{M'}$ is the sheaf
determined by the presheaf
\be
&& U\times U'\to G_M(U)\wh\ot G_{M'}(U'),\\
&& \dl: G_M(U)\wh\ot G_{M'}(U')\to C^\infty_{\si(U)}\wh\ot
C^\infty_{\si(U')}= C^\infty_{\si_M(U)\times \si_M(U')},
\ee
for any open subsets $U\subset M$ and $U'\subset M'$. This product
is a $G$-supermanifold of dimension $(n+r,m+s)$ \cite{bart}.
Furthermore, there is the epimorphism
\be
  \pr_1:(M,G_M) \times(M',G_{M'})\to (M,G_M).
\ee
One may define its section over an open subset $U\subset M$ as the
$G$-supermanifold morphism
\be
s_U:(U,G_M|_U)\to (M,G_M) \times(M',G_{M'})
\ee
such that $\pr_1\circ s_U$ is the identity morphism of
$(U,G_M|_U)$. Sections $s_U$ over all open subsets $U\subset M$
determine a sheaf on $M$. This sheaf should be provided with a
suitable graded commutative $G_M$-structure.

For this purpose, let us consider the product
\mar{+72}\beq
(M,G_M)\times (B^{r\mid s}, \ccG_{r\mid s}), \label{+72}
\eeq
  where $B^{r\mid s}$
is the superspace (\ref{+70}). It is called a  product
$G$-supermanifold. Since the $\La_0$-modules $B^{r\mid s}$ and
$B^{r+s,r+s}$ are isomorphic, $B^{r\mid s}$ has a natural
structure of an $(r+s,r+s)$-dimensional $G$-supermanifold. Because
$B^{r\mid s}$ is a free graded $\La$-module of the type $(r,s)$,
the sheaf $S_M^{r\mid s}$ of sections of the fibration
\mar{+71}\beq
(M,G_M)\times (B^{r\mid s}, \ccG_{r\mid s})\to (M,G_M) \label{+71}
\eeq
has the structure of the sheaf of  free graded $G_M$-modules of
rank $(r,s)$. Conversely, given a $G$-supermanifold $(M,G_M)$ and
a sheaf $S$ of
  free graded $G_M$-modules of rank
$(r,s)$ on $M$, there exists a product $G$-supermanifold
(\ref{+72}) such that $S$ is isomorphic to the sheaf of sections
of the fibration (\ref{+71}).

Let us turn now to the notion of a supervector bundle over
$G$-supermanifolds. Similarly to smooth vector bundles
\cite{book09,sard09}, one can require of the category of
supervector bundles over $G$-supermanifolds to be equivalent to
the category of locally free sheaves of graded modules on
$G$-supermanifolds. Therefore, we can restrict ourselves to
locally trivial supervector bundles with the standard fibre
$B^{r\mid s}$.

\begin{definition} \label{+78} \mar{+78}
A supervector bundle over a $G$-supermanifold $(M,G_M)$ with the
standard fibre $(B^{r\mid s},\ccG_{r\mid s})$ is defined as a pair
$((Y,G_Y),\pi)$ of a $G$-supermanifold $(Y,G_Y)$ and a
$G$-epimorphism
\mar{+76}\beq
\pi: (Y,G_Y)\to (M,G_M) \label{+76}
\eeq
such that $M$ admits an atlas $\{(U_\zeta,\psi_\zeta\}$ of local
$G$-isomorphisms
\be
\psi_\zeta: (\pi^{-1}(U_\zeta), G_Y|_{\pi^{-1}(U_\zeta)})\to
(U_\zeta, G_M|_{U_\zeta})\times (B^{r\mid s},\ccG_{r\mid s}).
\ee
\end{definition}

It is clear that sections of the supervector bundle (\ref{+76})
constitute a  locally free sheaf of graded $G_M$-modules. The
converse of this fact is the following \cite{bart}.

\begin{theorem} \label{+77} \mar{+77}
For any locally free sheaf $S$ of graded $G_M$-modules of rank
$(r,s)$ on a $G$-supermanifold $(M,G_M)$, there exists a
supervector bundle over $(M,G_M)$ such that $S$ is isomorphic to
the structure sheaf of its sections.
\end{theorem}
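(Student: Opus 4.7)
The plan is to mimic the classical construction that recovers a vector bundle from a locally free sheaf of finite rank, replacing smooth functions with $G$-superfunctions and scalar matrices with supermatrices in $GL(r|s;\La)$. First, since $S$ is locally free of rank $(r,s)$, choose an open cover $\{U_\zeta\}$ of $M$ together with $G_M|_{U_\zeta}$-module isomorphisms
\be
\vt_\zeta: S|_{U_\zeta} \ar (G_M|_{U_\zeta})^r \oplus (G_M|_{U_\zeta}\ot\La_1)^s,
\ee
where the right-hand side is identified with the sheaf of sections of the trivial product supervector bundle
\be
(U_\zeta,G_M|_{U_\zeta})\times(B^{r|s},\ccG_{r|s})
\ee
via the discussion preceding Definition \ref{+78}. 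On overlaps $U_\zeta\cap U_\xi$, the composition $\vt_\zeta\circ\vt_\xi^{-1}$ is an even automorphism of the free graded $G_M(U_\zeta\cap U_\xi)$-module of rank $(r,s)$, hence is given by an even invertible supermatrix
\be
g_{\zeta\xi}\in GL(r|m;G_M(U_\zeta\cap U_\xi)_0),
\ee
and these satisfy the cocycle condition $g_{\zeta\xi}g_{\xi\eta}=g_{\zeta\eta}$ on triple overlaps.

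Next, I would use this cocycle to glue the local trivial pieces. Each supermatrix $g_{\zeta\xi}$ defines a $G$-supermanifold automorphism $\Phi_{\zeta\xi}$ of $(U_\zeta\cap U_\xi,G_M)\times(B^{r|s},\ccG_{r|s})$ covering the identity on the base, by letting it act linearly on the fibre coordinates of $B^{r|s}$. Because entries of $g_{\zeta\xi}$ are even $G$-superfunctions, the induced sheaf morphism of structure sheaves preserves the evaluation morphism $\dl$, so $\Phi_{\zeta\xi}$ is indeed a $G$-isomorphism. I would then form $Y$ as the topological quotient of $\sqcup_\zeta \pi^{-1}(U_\zeta)$ by the identifications $\Phi_{\zeta\xi}$ (which is a well-defined paracompact space thanks to the cocycle condition), and glue the local structure sheaves $G_M|_{U_\zeta}\wh\ot \ccG_{r|s}$ into a sheaf $G_Y$ on $Y$ together with a global evaluation morphism inherited from the local ones. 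The projection $\pi:(Y,G_Y)\to(M,G_M)$ is obtained by gluing the local projections.

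Finally I would verify Definition \ref{+78} and identify the sheaf of sections with $S$. The gluing is local-ringed over $M$ and the trivializations $\psi_\zeta=\vt_\zeta\circ(\text{identification})$ exhibit $\pi$ as a supervector bundle with standard fibre $(B^{r|s},\ccG_{r|s})$. The same isomorphisms $\vt_\zeta$ identify local sections of $\pi$ with local sections of $S$, and they are compatible on overlaps precisely by the definition of $g_{\zeta\xi}$, producing a global isomorphism between $S$ and the sheaf of sections of $(Y,G_Y)\to(M,G_M)$.

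The main obstacle I expect is the gluing step: showing that the pieces $(U_\zeta,G_M|_{U_\zeta})\times(B^{r|s},\ccG_{r|s})$ glued via the $\Phi_{\zeta\xi}$ yield a bona fide $G$-supermanifold in the sense of Definition \ref{+30}, rather than merely a graded local-ringed space. Axioms 1 and 2 (or their equivalent $G$-supermanifold formulation) are local and survive gluing immediately; the delicate point is to check that the evaluation morphism $\dl:G_Y\to C^\La_Y$ is globally well defined, i.e.\ that the local $\dl$'s agree under $\Phi_{\zeta\xi}$, which relies on the fact that the supermatrix entries of $g_{\zeta\xi}$ are $G$-superfunctions and hence behave consistently under the evaluation map of $G_M$. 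Once this compatibility is established, the remainder of the verification is routine bookkeeping already standard in the smooth vector bundle case.
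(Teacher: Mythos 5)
Your proposal is correct and follows essentially the route the paper itself indicates (Remark \ref{+79}): read off the transition functions of the locally free sheaf $S$ over a trivializing cover, observe that they form a cocycle of even invertible supermatrices of $G$-superfunctions (whose evaluations give $GL(r|s;\La)$-valued $G^\infty$-morphisms), and glue the trivial product supervector bundles $(U_\zeta,G_M|_{U_\zeta})\times(B^{r\mid s},\ccG_{r\mid s})$ along the induced fibrewise-linear $G$-isomorphisms, using the already established identification of the free rank-$(r,s)$ case with sections of the product fibration (\ref{+71}). Only cosmetic slips: the local model should simply be the free graded $G_M$-module of rank $(r,s)$ rather than $(G_M)^r\oplus(G_M\ot\La_1)^s$, and $GL(r|m;\cdot)$ should read $GL(r|s;\cdot)$.
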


The fibre $Y_q$, $q\in M$, of the supervector bundle in Theorem
\ref{+77} is the quotient
\be
S_q/\cM_q\cong S_{Mq}^{r\mid s}/(\cM_q\cdot S_{Mq}^{r\mid s})\cong
B^{r\mid s}
\ee
of the stalk $S_q$ by the submodule $\cM_q$ of the germs $s\in
S_q$ whose evaluation $\dl(f)(q)$ vanishes. This fibre is a graded
$\La$-module isomorphic to $B^{r\mid s}$, and is provided with the
structure of the standard $G$-supermanifold.

\begin{remark} \label{+79} \mar{+79}
The proof of Theorem \ref{+77} is based on the fact that, given
the transition functions $\rho_{\zeta\xi}$ of the sheaf $S$, their
evaluations
\mar{+80}\beq
g_{\zeta\xi}=\dl(\rho_{\zeta\xi}) \label{+80}
\eeq
  define the morphisms
\be
U_\zeta\cap U_\xi \to  GL(r|s;\La),
\ee
and they are assembled into a cocycle of $G^\infty$-morphisms from
$M$ to the general linear graded  group $GL(r|s;\La)$. Thus, we
come to the notion of a $G^\infty$-vector bundle. Its definition
is a repetition of Definition \ref{+78} if one replaces
$G$-supermanifolds and $G$-morphisms with the $G^\infty$- ones.
Moreover, the $G^\infty$-supermanifold underlying a supervector
bundle (see Remark \ref{+81}) is a $G^\infty$-supervector bundle,
whose transition functions $g_{\zeta\xi}$ are related to those of
the supervector bundle by the evaluation morphisms (\ref{+80}),
and are $GL(r|s;\La)$-valued transition functions.
\end{remark}

Since the category of supervector bundles over a $G$-supermanifold
$(M,G_M)$ is equivalent to the category of locally free sheaves of
graded $G_M$-modules, one can define the usual operations of
direct sum, tensor product, etc. of supervector bundles.

Let us note that any supervector bundle admits the canonical
global zero section. Any section of the supervector bundle $\pi$
(\ref{+76}), restricted to its trivialization chart
\mar{+156}\beq
(U, G_M\mid_U)\times (B^{r\mid s},\ccG_{r\mid s}), \label{+156}
\eeq
is represented by a sum $s = s^a(q)\e_a$, where $\{\e_a\}$ is the
basis  for the graded $\La$-module $B^{r\mid s}$, while $s^a(q)$
are $G$-superfunctions on $U$. Given another trivialization chart
$U'$ of $\pi$, a transition function
\mar{+155}\beq
s'^b(q)\e'_b=s^a(q)h^b{}_a(q)\e_b, \qquad q\in U\cap U',
\label{+155}
\eeq
is given by the $(r+s)\times(r+s)$ matrix $h$ whose entries
$h^b{}_a(q)$ are $G$-superfunctions on $U\cap U'$. One can think
of this matrix as being a section of the supervector bundle over
$U\cap U$ with the above mentioned group $GL(r|s;\La)$ as a
typical fibre.

\begin{example} \label{+84} \mar{+84}
Given a $G$-supermanifold $(M,G_M)$, let us consider the locally
free sheaf $\gd G_M$ of graded derivations of $G_M$. In accordance
with Theorem \ref{+77}, there is a supervector bundle $T(M,G_M)$,
called supertangent bundle, whose structure sheaf is isomorphic to
$\gd G_M$. If $(q^1,\ldots,q^{m+n})$ and $(q'^1,\ldots,q'^{m+n})$
are two coordinate charts on $M$, the Jacobian matrix
\be
h^i_j=\frac{\dr q'^i}{\dr q^j}, \qquad i,j=1,\ldots, n+m,
\ee
(see the prescription (\ref{+83})) provides the transition
morphisms for $T(M,G_M)$.

It should be emphasized that the underlying $G^\infty$-vector
bundle of the supertangent bundle $T(M,G_M)$, called
$G^\infty$-supertangent bundle, has the transition functions
$\dl(h^i_j)$ which cannot be written as the Jacobian matrices
since the derivatives of $G^\infty$-superfunctions with respect to
odd arguments are ill defined and the sheaf $\gd G^\infty_M$ is
not locally free.
\end{example}

\section{Superconnections}

Given a supervector bundle $\pi$ (\ref{+76}) with the structure
sheaf $S$, one can introduce a connection on this supervector
bundle  as a splitting of the the exact sequence of sheaves
\mar{+121}\beq
0\to \gd G_M^*\ot S\to (G_M\oplus\gd G_M^*)\ot S\to S\to 0
\label{+121}
\eeq
\cite{book00}. Its splitting is an even sheaf morphism
\mar{+122}\beq
\nabla: S\to \gd^*G_M\ot S \label{+122}
\eeq
satisfying the Leibniz rule
\mar{+158}\beq
\nabla (fs)=df\ot s + f\nabla(s), \qquad f\in G_M(U), \qquad s\in
S(U), \label{+158}
\eeq
for any open subset $U\in M$. The sheaf morphism (\ref{+122}) is
called a superconnection on the supervector bundle $\pi$
(\ref{+76}). Its curvature is given by the expression
\mar{+157}\beq
R=\nabla^2:S\to \op\w^2\gd G^*_M\ot S. \label{+157}
\eeq

The exact sequence (\ref{+121}) need not be split. One can apply
the criterion in Section 1.8 in order to study the existence of a
superconnection on supervector bundles. Namely, the exact sequence
(\ref{+121}) leads to the exact sequence of sheaves
\be
0\to \hm(S,\gd G_M^*\ot S)\to \hm(S,(G_M\oplus\gd G_M^*)\ot S) \to
\hm(S,S)\to 0
\ee
and to the corresponding exact sequence of the cohomology groups
\be
&& 0\to H^0(M; \hm(S,\gd G_M^*\ot S)) \to H^0(M;
\hm(S,(G_M\oplus\gd G_M^*)\ot S)) \\
&& \qquad \to H^0(M;\hm(S,S))\to H^1(M;\hm(S,\gd G_M^*\ot S))\to
\cdots.
\ee
The exact sequence  (\ref{+121}) defines the Atiyah class
\be
{\rm At}(\pi)\in H^1(M;\hm(S,\gd G_M^*\ot S))
\ee
 of the supervector bundle $\pi$ (\ref{+76}). If
the Atiyah class vanishes, a superconnection on this supervector
bundle exists. In particular, a superconnection exists if the
  cohomology set $H^1(M;\hm(S,\gd G_M^*\ot S))$ is trivial.
In contrast with the sheaf of smooth functions, the structure
sheaf $G_M$ of a $G$-supermanifold is not acyclic in general,
cohomology $H^*(M;\hm(S,\gd G_M^*\ot S))$ is not trivial, and a
supervector bundle need not admit a superconnection.

\begin{example} \label{+150} \mar{+150}
In accordance with Proposition \ref{+140}, the structure sheaf of
the standard $G$-supermanifold $(B^{n,m},\ccG_{n,m})$ is acyclic,
and the trivial supervector bundle
\mar{+151}\beq
(B^{n,m},\ccG_{n,m})\times (B^{r\mid s},\ccG_{r\mid s}) \to
(B^{n,m},\ccG_{n,m}) \label{+151}
\eeq
has obviously a superconnection, e.g., the trivial
superconnection.
\end{example}

\begin{example} By virtue of Proposition \ref{+145}, the structure sheaf
of a DeWitt $G$-supermanifold $(M,G_M)$ is acyclic, and so is the
sheaf $\hm(S,\gd G_M^*\ot S)$. It follows that any supervector
bundle over a DeWitt $G$-supermanifold admits a superconnection.
\end{example}

Example \ref{+150} enables one to obtain a local coordinate
expression for a superconnection on a supervector bundle $\pi$
(\ref{+76}), whose typical fibre is $B^{r\mid s}$ and whose base
is a $G$-supermanifold locally isomorphic to the standard
$G$-supermanifold $(B^{n,m},\ccG_{n,m})$. Let $U\subset M$
(\ref{+156}) be a trivialization chart of this supervector bundle
such that every section $s$ of $\pi|_U$ is represented by a sum
$s^a(q)\e_a$, while the
  sheaf of one-superforms
$\gd^* G_M|_U$ has a local basis $\{d q^i\}$.
  Then a
superconnection $\nabla$ (\ref{+122}) restricted to this
trivialization chart can be given by a collection of coefficients
$\nabla_i{}^a{}_b$:
\mar{+160}\beq
\nabla (\e_a)=dq^i\ot (\nabla_i{}^b{}_a\e_b), \label{+160}
\eeq
which  are $G$-superfunctions on $U$. Bearing in mind the Leibniz
rule (\ref{+158}), one can compute the coefficients of the
curvature form (\ref{+157}) of the superconnection (\ref{+160}).
We have
\be
&& R(\e_a)=\frac12 dq^i\w dq^j\ot R_{ij}{}^b{}_a\e_b, \\
&& R_{ij}{}^a{}_b =(-1)^{[i][j]}\dr_i\nabla_j{}^a{}_b -\dr_j\nabla_i{}^a{}_b
+ (-1)^{[i]([j]+[a]+[k])}\nabla_j{}^a{}_k\nabla_i{}^k{}_b -\\
&& \qquad (-1)^{[j]([a]+[k])}\nabla_i{}^a{}_k\nabla_j{}^k{}_b.
\ee
In a similar way, one can obtain the transformation law of the
superconnection coefficients (\ref{+160}) under the transition
morphisms (\ref{+155}). In particular, any trivial supervector
bundle admits the trivial superconnection $\nabla_i{}^b{}_a=0$.

\section{Principal superconnections}

In contrast with a supervector bundle, the  structure sheaf $G_P$
of a principal superbundle $(P,G_P)\to (M,G_M)$ is not a sheaf of
locally free $G_M$-modules in general. Therefore, the above
technique of connections on modules and sheaves is not applied to
principal superconnections in a straightforward way. Principal
superconnections are  introduced on principal superbundles by
analogy with principal connections on smooth principal bundles
\cite{bart}. For the sake of simplicity, let us denote
$G$-supermanifolds $(M,G_M)$ and their morphisms
\be
(\vf:M\to N, \qquad \Phi:G_N\to \vf_*(M))
\ee
by $\wh M$ and $\wh \vf$, respectively. Given a point $q\in M$, by
$\wh q=(q,\La)$ is meant the trivial $G$-supermanifold of
dimension $(0,0)$. We will start with the notion of a $G$-Lie
supergroup $\wh H$. The relations between $G$- $GH^\infty$- and
$G^\infty$-Lie supergroups follow the relations between the
corresponding classes of superfunctions.

\begin{definition} \label{+205} \mar{+205}
A $G$-supermanifold $\wh H=(H,\cH)$ is said to be a $G$-Lie
supergroup if there exist the following $G$-supermanifold
morphisms:

$\bullet$ a multiplication $\wh m:\wh H\times \wh H\to\wh H$,

$\bullet$ a unit $\wh\ve: \wh e\to \wh H$,

$\bullet$ an inverse $\wh  S:\wh H\to\wh H$,

\noindent together with the natural identifications
\be
\wh e\times \wh H=\wh H\times \wh e= \wh H,
\ee
which satisfy the associativity
\be
\wh m\circ(\Id \times \wh m)=\wh m\circ(\wh m\times\Id):\wh
H\times\wh H\times\wh H\to \wh H\times\wh H\to \wh H,
\ee
the unit property
\be
(\wh m\circ (\wh\ve\times\Id))(\wh e\times \wh H)= (\wh m\circ
(\Id\times\wh\ve))(\wh H\times \wh e)=\id H,
\ee
and the inverse property
\be
(\wh m\circ (\wh S,\Id))(\wh H)=(\wh m\circ (\Id,\wh S))(\wh
H)=\wh\ve(\wh e).
\ee
\end{definition}

Given a point $g\in H$, let us denote by $\wh g:\wh e\to\wh H$ the
$G$-supermanifold morphism whose range in $H$ is $g$. Then one can
introduce the notions of the left translation $\wh L_g$ and the
right translation $\wh R_g$ as the $G$-supermanifold morphisms
\be
&&\wh L_g: \wh H=\wh e\times \wh H\ar^{\wh g\times\Id} \wh H\times\wh H\ar^{\wh
m} \wh H,\\
&&\wh R_g: \wh H=\wh H\times \wh e\ar^{\Id\times\wh g} \wh H\times\wh H\ar^{\wh
m} \wh H.
\ee

\begin{remark} \label{+202} \mar{+202}
Given a $G$-Lie supergroup $\wh H$, the underlying smooth manifold
$H$ is provided with the structure of a real Lie group of
dimension $2^{N-1}(n+m)$, called the underlying Lie group. In
particular, the actions on the underlying Lie group $H$,
corresponding to the left and right translations by $\wh g$, are
ordinary left and right translations by $g$.
\end{remark}

Let us reformulate the group axioms in Definition \ref{+205} in
terms of the structure sheaf $\cH$ of the $G$-Lie supergroup
$(H,\cH)$. We observe that $\cH$ has properties of a sheaf of
graded Hopf algebras as follows.

If $(H,\cH)$ is a $G$-Lie supergroup, the structure sheaf $\cH$ is
provided with the sheaf morphisms:

$\bullet$ a comultiplication $\wh m^*:\cH\to m_*(\cH\wh\ot\cH)$,

$\bullet$ a counit $\wh\ve^*:\cH\to e_*(\La)$,

$\bullet$ a coinverse $\wh S: \cH\to s_*\cH$.

\noindent Let us denote
\be
k=m\circ(\Id\times m)=m\circ (m\times\Id): H\times H\times H\to H.
\ee
Then the group axioms in Definition \ref{+205} are equivalent to
the relations
\be
&& ((\Id\ot\wh m^*)\circ\wh m^*)(\cH)=((\wh m^*\ot\Id)\circ\wh m^*)(\cH)=
k_*(\cH\wh\ot\cH\wh\ot\cH),\\
&& (\wh m^*\circ (\Id\ot\wh\ve^*))(\cH\wh\ot e_*(\La))=
(\wh m^*\circ (\wh\ve^*\ot\Id))(e_*(\La)\wh\ot\cH)=\id\cH,\\
&& (\id\cdot\wh S^*)\circ\wh m^*=(\wh S^*\cdot\Id)\circ\wh m^*=\wh\ve^*.
\ee
Comparing these relations with the axioms of a Hopf algebra in
Section 10.2, one can think of the structure sheaf of a $G$-Lie
group as being a sheaf of graded topological Hopf algebras.

\begin{example}
The general linear graded group $GL(n|m;\La)$ is endowed with the
natural structure of an $H^\infty$-supermanifold of dimension
$(n^2+m^2, 2nm)$. The matrix multiplication gives the
$H^\infty$-morphism
\be
m: GL(n|m;\La)\times GL(n|m;\La)\to GL(n|m;\La)
\ee
such that $F(g,g')\mapsto F(gg')$. It follows that $GL(n|m;\La)$
is an $H^\infty$-Lie supergroup. It is trivially extended to the
$G$-Lie supergroup $\wh{GL}(n\mid m;\La)$, called the general
linear supergroup.
\end{example}

A Lie superalgebra ${\got h}$ of a $G$-Lie supergroup $\wh H$ is
defined as an algebra of left-invariant supervector fields on $\wh
H$. Let us recall that a supervector field $u$ on a
$G$-supermanifold $\wh H$ is a derivation of its structure sheaf
$\cH$. It is called left-invariant if
\be
(\Id\ot u)\circ \wh m^* =\wh m^*\circ u.
\ee
If $u$ and $u'$ are left-invariant supervector fields, so are
$[u,u']$ and $au+a'u'$, $a,a'\in\La$. Hence, left-invariant
supervector fields constitute a Lie superalgebra. The Lie
superalgebra ${\got h}$ can be identified with the supertangent
space $T_e(\wh H)$. Moreover, there is the sheaf isomorphism
\mar{+229}\beq
\cH\ot {\got h}= \gd\cH, \label{+229}
\eeq
i.e., the sheaf of supervector fields on a $G$-Lie supergroup $\wh
H$ is the globally free  sheaf of graded $\cH$-modules of rank
$(n,m)$, generated by left-invariant supervector fields. The Lie
superalgebra of right-invariant supervector fields on $\wh H$ is
introduced in a similar way.

Let us consider the right action of a $G$-Lie supergroup $\wh H$
on a $G$-supermanifold $\wh P$. This is a $G$-morphism
\be
\wh\rho:\wh P\times \wh H\to \wh P
\ee
such that
\be
&& \wh\rho\circ (\wh\rho\times\Id)=\wh\rho\circ(\Id\times\wh m):\wh P\times
\wh H\times \wh H\to \wh P,\\
&& \wh\rho\circ (\Id\times \wh \ve)(\wh P\times \wh e)=\id\wh P.
\ee
The left action of $\wh H$ on $\wh P$ is defined similarly.

\begin{example}
Obviously, a $G$-Lie supergroup acts on itself both on the left
and on the right by the multiplication morphism $\wh m$.

The general linear supergroup $\wh{GL}(n|m;\La)$ acts linearly on
the standard supermanifold $B^{n\mid n}$ on the left by the matrix
multiplication which is a $G$-morphism.
\end{example}

Let $\wh P$ and $\wh P'$ be $G$-supermanifolds that are acted on
by the same $G$-Lie supergroup $\wh H$. A $G$-supermanifold
morphism $\wh\vf:\wh P\to \wh P'$ is said to be $\wh H$-invariant
if
\be
\wh\vf\circ\wh\rho= \wh\rho'\circ(\wh\vf\times\Id): \wh P\times\wh
H\to \wh P'.
\ee

\begin{definition} \label{+224} \mar{+224}
A quotient of an action of a $G$-Lie supergroup on a
$G$-submanifold $\wh P$ is a pair $(\wh M, \wh \pi)$ of a
$G$-supermanifold $\wh M$ and a $G$-supermanifold morphism $\wh
\pi:\wh P\to\wh M$ such that:

(i) there is the equality
\mar{+222}\beq
\wh\pi\circ\wh\rho=\wh\pi\circ\wh\pr_1:\wh P\times\wh H\to\wh
M,\label{+222}
\eeq

(ii) for every morphism $\wh\vf:\wh P\to \wh M'$ such that
$\wh\vf\circ\wh\rho=\wh\vf\circ\wh\pr_1$, there is a unique
morphism $\wh g:\wh M\to\wh M'$ with $\wh\vf=\wh g\circ \wh\pi$.
\end{definition}

The quotient $(\wh M,\wh\pi)$ does not   necessarily exists. If it
exists, there is a monomorphism  of the structure sheaf $G_M$ of
$\wh M$ into the direct image $\pi_*G_P$. Since the $G$-Lie group
$\wh H$ acts trivially on $\wh M$, the range of this monomorphism
is a subsheaf of $\pi_*G_P$, invariant under the action of $\wh
H$. Moreover, there is an isomorphism
\mar{+221}\beq
G_M\cong (\pi_*G_P)^H \label{+221}
\eeq
between $G_M$ and the subsheaf of $G_P$ of $\wh H$-invariant
sections. The latter is generated by sections of $G_P$ on
$\pi^{-1}(U)$, $U\subset M$, which are $\wh H$-invariant as
$G$-morphisms $\wh U\to \La$, where one takes the trivial action
of $\wh H$ on $\La$.

Let us denote the morphism in the equality (\ref{+222}) by $\vt$.
It is readily observed that the invariant sections of
$G_P(\pi^{-1}(U))$ are exactly the elements which have the same
image under the morphisms
\be
&&\wh\rho^*:G_P(\pi^{-1}(U))\to (\cH\wh\ot G_P)(\vt^{-1}(U)),\\
&& \wh\pr_1^*:G_P(\pi^{-1}(U))\to (\cH\wh\ot G_P)(\vt^{-1}(U)).
\ee
Then the isomorphism (\ref{+221}) leads to the exact sequence of
sheaves of $\La$-modules on $M$
\mar{+223}\beq
0\ar G_M\ar^{\wh\pi^*} \pi_*G_P\ar^{\wh\rho^*-\wh\pr_1^*}
\vt_*(G_M\wh\ot \cH). \label{+223}
\eeq

\begin{definition} \label{+225} \mar{+225}
A principal superbundle of a $G$-Lie supergroup $\wh H$ is defined
as a locally trivial quotient $\pi: \wh P\to\wh M$, i.e., there
exists an open covering $\{U_\zeta\}$ of $M$ together with $\wh
H$-invariant isomorphisms
\be
\wh\psi_\zeta: \wh P\mid_{\wh U_\zeta}\to \wh U_\zeta\times \wh H,
\ee
where $\wh H$ acts on
\mar{+226}\beq
\wh U_\zeta\times \wh H\to \wh U_\zeta \label{+226}
\eeq
  by the right
multiplication.
\end{definition}

\begin{remark}
In fact, we need only the condition (i) in Definition \ref{+224}
of the action of $\wh H$ on $\wh P$ and the condition of local
triviality of $\wh P$.
\end{remark}

In an equivalent way, one can think of a principal superbundle as
being glued together of trivial principal superbundles
(\ref{+226}) by $\wh H$-invariant transition functions
\be
\wh \f_{\zeta\xi}: \wh U_{\zeta\xi}\times \wh H\to \wh
U_{\zeta\xi}\times \wh H, \qquad U_{\zeta\xi}=U_\zeta\cap U_\xi,
\ee
which fulfill the cocycle condition.

As in the case of smooth principal bundles, the following two
types of supervector fields on a principal superbundle are
introduced.

\begin{definition}
A supervector field $u$ on a principal superbundle $\wh P$ is said
to be invariant if
\be
\wh\rho^*\circ u=(u\ot \Id)\circ u: G_P\to \rho_*(G_P\wh\ot\cH).
\ee
\end{definition}

One can associate to every open subset $V\subset M$ the
$G_M(V)$-module of all $\wh H$-invariant supervector fields on
$\pi^{-1}(V)$, thus defining the sheaf $\gd^H(\pi_*G_P)$ of
$G_M$-modules.

\begin{definition}
A fundamental supervector field $\wt \up$ associated to an element
$\up\in {\got h}$ is defined by the condition
\be
\wt\up=(\Id\ot \up)\circ \wh\rho^*: G_P\to G_P\wh\ot e_*(\La)=G_P.
\ee
\end{definition}

Fundamental supervector fields generate the sheaf $\cV G_P$ of
$G_P$-modules of vertical supervector field on the principal
superbundle $\wh P$, i.e., $u\circ\pi^*=0$. Moreover, there is an
isomorphism of sheaves of $G_P$-modules
\be
G_P\ot{\got h}\ni F\ot\up\mapsto F\wt\up\in \cV G_P,
\ee
which is similar to the isomorphism (\ref{+229}).

Let us consider the sheaf
\be
(\pi_*\cV G_P)^H=\pi_*(\cV G_P)\cap \gd^H(\pi_*G_P)
\ee
on $M$ whose sections are vertical $\wh H$-invariant supervector
fields.

\begin{prop} \cite{bart}. There is the exact sequence of sheaves of
$G_M$-modules
\mar{+228}\beq
0\to (\pi_*\cV G_P)^H\to \gd^H(\pi_*G_P)\to \gd G_M\to 0.
\label{+228}
\eeq
\end{prop}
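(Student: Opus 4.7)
The plan is to construct the right-hand morphism $\alpha:\gd^H(\pi_*G_P)\to \gd G_M$, identify its kernel as $(\pi_*\cV G_P)^H$, and establish local surjectivity; injectivity of the left-hand inclusion is tautological. To construct $\alpha$ I would note that any $\wh H$-invariant supervector field $u$ on $\pi^{-1}(V)$, by virtue of the invariance identity $\wh\rho^*\circ u=(u\ot\Id)\circ\wh\rho^*$, preserves the subsheaf of $\wh H$-invariant sections: if $\wh\rho^*f=\wh\pr_1^*f$ then $\wh\rho^*(uf)=\wh\pr_1^*(uf)$. Via the isomorphism (\ref{+221}) identifying $G_M$ with $(\pi_*G_P)^H$, the restriction of $u$ descends to a graded derivation of $G_M(V)$, and this yields the morphism $\alpha$.

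For the kernel and for surjectivity I would pass to a trivialization chart $\wh\psi_\zeta:\wh P|_{\wh U_\zeta}\to \wh U_\zeta\times \wh H$. Locally, $G_P(\pi^{-1}(U_\zeta))$ is the completed tensor product $G_M(U_\zeta)\wh\ot\cH(H)$ supplied by the presheaf (\ref{+67}), and the $\wh H$-invariant sections are those of the form $f\ot 1$. An $\wh H$-invariant derivation killing every $f\ot 1$ therefore annihilates $\pi^*G_M$, so $u\circ\pi^*=0$ and $u$ is vertical; conversely, a vertical $\wh H$-invariant field evidently lies in $\ker\alpha$. For local surjectivity, given $v\in\gd G_M(U_\zeta)$, the derivation $\wt v=v\wh\ot\Id$ on $G_M(U_\zeta)\wh\ot\cH$ is $\wh H$-invariant, since the right action affects only the second factor, and $\alpha(\wt v)=v$. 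This suffices because exactness of sheaves is a stalk-wise condition.

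The main obstacle is the kernel computation: one must show rigorously that an $\wh H$-invariant derivation vanishing on the invariant subsheaf is tangent to the $\wh H$-fibres. This is the superanalytic analogue of a standard smooth fact, but in the $G$-supermanifold setting it requires careful use of the completed tensor product entering the local model (\ref{+67}) and of the local description of $\gd G_P$ as a free $G_P$-module generated by coordinate derivations, which then allows one to separate base and fibre directions and conclude that the derivation has no component along the base.
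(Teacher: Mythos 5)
The paper itself gives no argument for this proposition: it is stated with a reference to \cite{bart} and the text immediately moves on, so there is no in-paper proof to compare yours against. Judged on its own, your outline is the standard argument and is essentially sound: the descent of an $\wh H$-invariant field to $\gd G_M$ via the isomorphism (\ref{+221}), the identification of the kernel with vertical invariant fields, and local surjectivity by lifting $v$ to $v\wh\ot\Id$ in a trivialization $\wh U_\zeta\times\wh H$ (where invariance of the lift holds because $\wh\rho^*=\Id\ot\wh m^*$ acts only through the second factor) are exactly the right steps, and exactness is indeed a stalkwise/local matter. Note in passing that the invariance identity as printed in the text, $\wh\rho^*\circ u=(u\ot\Id)\circ u$, is a typo; the form you use, $\wh\rho^*\circ u=(u\ot\Id)\circ\wh\rho^*$, is the one needed for your first step to go through.

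The one place where your sketch genuinely under-specifies is the kernel identification, and you have located it correctly. What you get directly is that $\ker\alpha$ consists of invariant fields $u$ with $u\circ\pi^*=0$; but $(\pi_*\cV G_P)^H$ is defined through $\cV G_P$, which the paper introduces as the subsheaf generated by fundamental supervector fields. So you must show that, locally, every derivation of $G_M(U_\zeta)\wh\ot\cH(H)$ annihilating $G_M(U_\zeta)\ot 1$ lies in $G_P\ot{\got h}$, i.e.\ in the span of fundamental fields. This follows from the decomposition of $\gd\bigl(G_M(U_\zeta)\wh\ot\cH(H)\bigr)$ into base and fibre components together with the isomorphism (\ref{+229}), $\cH\ot{\got h}=\gd\cH$, but it does require the continuity/completion argument you allude to (derivations of the graded Fr\'echet algebras involved are determined by their values on a dense subalgebra, so they respect the completed tensor product). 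Making that step explicit is what separates your outline from a complete proof; the rest is routine verification, including $G_M$-linearity of $\alpha$, which you do not mention but which is immediate since multiplication by $f\in G_M$ on $\gd^H(\pi_*G_P)$ is multiplication by $\pi^*f$.
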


The exact sequence (\ref{+228}) is similar to the exact sequence
of sheaves of $C^\infty$-modules
\be
0\to (V_GP)_X\to (T_GP)_X\to \gd C^\infty_X\to 0
\ee
in the case of smooth principal bundles. Accordingly, we come to
the following definition of a superconnection on a principal
superbundle.

\begin{definition}
A superconnection on a principal superbundle $\pi: \wh P\to\wh M$
(or simply a principal superconnection) is defined as a splitting
\mar{+230}\beq
\nabla:\gd G_M \to \gd^H(\pi_*G_P) \label{+230}
\eeq
of the exact sequence (\ref{+228}).
\end{definition}

In contrast with principal connections on smooth principal
bundles, principal superconnections on a $\wh H$-principal
superbundle need not exist.

A principal superconnection can be described  in terms of a ${\got
h}$-valued one-superform
\be
\om: \gd G_P\to G_P\ot {\got h}\cong \cV G_P,
\ee
on $\wh P$ called a superconnection form. Indeed, every splitting
$\nabla$ (\ref{+230}) defines the morphism of $G_P$-modules
\be
\wh \pi^*(\gd G_M)\to \wh\pi^*(\gd^H(\pi_*G_P))\cong \gd G_P
\ee
which splits the exact sequence
\be
0\to \cV G_P\to\gd G_P\to \wh\pi^*(\gd G_M)\to 0.
\ee
Therefore, there exists the exact sequence
\be
0\to \wh\pi^*(\gd G_M)\cV G_P\to\gd G_P\op\to^\om \cV G_P\to 0.
\ee

Let us note that, by analogy with associated smooth bundles, one
can introduce associated  superbundles and superconnections on
these superbundles. In particular, every supervector bundle of
fibre dimension $(r,s)$ is a superbundle associated with
$\wh{GL}(r|s;\La)$-principal superbundle \cite{bart}.

\section{Supermetric}

In gauge theory on a principal bundle $P\to X$ with a structure
Lie group $G$ reduced to its subgroup $H$, the corresponding
global section of the quotient bundle $P/H\to X$ is regarded as a
classical Higgs field \cite{book09,higgs}, e.g., a gravitational
field in gauge gravitation theory \cite{book09,iva,sard06}.

Let $\pi:P\to X$ be a principal smooth bundle with a structure Lie
group $G$. Let $H$ be a closed (consequently, Lie) subgroup of
$G$. Then $G\to G/H$ is an $H$-principal fiber bundle and, by the
well known theorem, $P$ is split into the composite fiber bundle
\mar{g1}\beq
P\ar^{\pi_H} P/H\ar X, \label{g1}
\eeq
where $P\to P/H$ is an $H$-principal bundle and $P/H\to X$ is a
$P$-associated bundle with the typical fiber $G/H$. One says that
the structure group $G$ of a principal bundle $P$ is reducible to
$H$ if there exists an $H$-principal subbundle of $P$. The
necessary and sufficient conditions of the reduction of a
structure group are stated by the well known theorem
\cite{book09,higgs}.

\begin{theorem} \label{g00} \mar{g00} There is one-to-one
correspondence $P^h=\pi_H^{-1}(h(X))$ between the reduced
$H$-principal subbundles $P^h$ of $P$ and the global sections $h$
of the quotient bundle $P/H\to X$.
\end{theorem}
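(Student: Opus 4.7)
The plan is to construct the bijection explicitly in both directions using the composite fibration \eqref{g1} and then check that the two constructions are mutually inverse.

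For the forward direction, suppose $i\colon P^h \hookrightarrow P$ is a reduced $H$-principal subbundle. The projection $\pi_H\colon P \to P/H$ is constant on $H$-orbits, so the composition $\pi_H \circ i\colon P^h \to P/H$ is likewise $H$-invariant. Since $P^h \to X$ is itself an $H$-principal bundle, its orbit space is identified with $X$, so $\pi_H \circ i$ factors through a smooth map $h\colon X \to P/H$. Composing with $P/H \to X$ recovers the identity on $X$ because $i$ covers the identity on base points, hence $h$ is a global section of $P/H \to X$.

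For the reverse direction, given a global section $h\colon X \to P/H$, define $P^h := \pi_H^{-1}(h(X))$. The key point is that $\pi_H\colon P \to P/H$ is itself an $H$-principal bundle, so the pullback $h^{*}P \to X$ along $h$ is an $H$-principal bundle over $X$. Since $h$ is injective (being a section of $P/H \to X$), the natural map $h^{*}P \to P$ is an embedding with image exactly $\pi_H^{-1}(h(X))$. This endows $P^h$ with the structure of an $H$-principal subbundle of $P$, where the $H$-action is the restriction of the $G$-action (note that $H$-orbits in $P$ coincide with fibers of $\pi_H$). Local triviality of $P^h \to X$ follows from local triviality of $P \to P/H$ together with continuity of $h$.

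Finally, I would verify the two constructions are inverse. Starting from a section $h$, the associated subbundle is $P^h = \pi_H^{-1}(h(X))$ by construction, and the section induced from $P^h$ via the forward procedure sends $x \in X$ to the common image $\pi_H(p)$ for any $p \in P^h$ with $\pi(p) = x$, which is precisely $h(x)$. Starting from a subbundle $Q \subset P$ with induced section $h_Q$, the set $\pi_H^{-1}(h_Q(X))$ equals $Q$ because each fiber $Q_x$ is a single $H$-orbit in $P_x$, hence a single fiber of $\pi_H$, namely the one sitting over $h_Q(x)$.

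The main obstacle is the reverse direction: one must check that $\pi_H^{-1}(h(X))$ genuinely inherits the structure of a smooth $H$-principal subbundle rather than just a set-theoretic preimage. The cleanest route is to identify it with the pullback $h^{*}P$ of the $H$-principal bundle $\pi_H\colon P \to P/H$; everything else then follows from standard principal-bundle pullback constructions and the $H$-equivariance of $\pi_H$.
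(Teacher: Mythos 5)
Your argument is correct, and it is the standard proof of this classical reduction theorem. Note, however, that the paper itself gives no proof of Theorem \ref{g00}: it is quoted as ``the well known theorem'' with references, and the composite fibration (\ref{g1}) is simply recorded as the structural fact behind it. So there is nothing in the text to match your proof against directly; the closest comparison is the paper's proof of the super analogue, Theorem \ref{g20}. There the author follows the same overall scheme you use --- first establish that $\wh P\to \wh P/\wh H$ is a principal $\wh H$-superbundle sitting in a composite superbundle (\ref{g21}), then pass between supersubbundles and global sections of the quotient --- but the smooth-structure issue that you resolve by identifying $\pi_H^{-1}(h(X))$ with the pullback $h^*P$ of the $H$-principal bundle $\pi_H:P\to P/H$ is handled there by explicitly assembling trivialization atlases from those of $\wh G\to\wh G/\wh H$ and $\wh P\to\wh M$. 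Your pullback route is cleaner in the smooth category (local triviality, smoothness of the induced section, and the embedding of $\pi_H^{-1}(h(X))$ all come for free from standard pullback constructions, using that $h(X)$ is an embedded submanifold and $\pi_H$ a submersion), whereas the atlas-by-atlas construction is what survives in the $G$-supermanifold setting, where quotients and local triviality are genuinely delicate. One small point worth making explicit in your forward direction: smoothness of the factored map $h$ follows because $P^h\to X$ is a surjective submersion (or by composing with local sections of $P^h\to X$); otherwise the argument is complete.
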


As was mentioned above, sections of $P/H\to X$ are treated in
gauge theory as classical Higgs fields. For instance, let $P=LX$
be the $GL(n,\Bbb R)$-principal bundle of linear frames in the
tangent bundle $TX$ of $X$ (n=\di X). If $H=O(k,n-k)$, then a
global section of the quotient bundle $LX/O(k,n-k)$ is a
pseudo-Riemannian metric on $X$.

Our goal is the following extension of Theorem \ref{g00} to
principal superbundles \cite{sard08}.

\begin{theorem} \label{g20} \mar{g20}
Let $\wh P\to \wh M$ be a principal $G$-superbundle with a
structure $G$-Lie supergroup $\wh G$, and let $\wh H$ be a closed
$G$-Lie supersubgroup of $\wh G$ such that $\wh G\to \wh G/\wh H$
is a principal superbundle. There is one-to-one correspondence
between the principal $G$-supersubbundles of $\wh P$ with the
structure $G$-Lie supergroup $\wh H$ and the global sections of
the quotient superbundle $\wh P/\wh H\to \wh M$ with the typical
fiber $\wh G/\wh H$.
\end{theorem}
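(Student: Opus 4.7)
The plan is to mimic the classical reduction theorem (Theorem \ref{g00}) inside the category of $G$-supermanifolds by constructing mutually inverse assignments between principal $\wh H$-supersubbundles of $\wh P$ and global sections of $\wh P/\wh H\to \wh M$. As a preliminary step I would upgrade $\wh P\to \wh P/\wh H$ to a principal $\wh H$-superbundle: by Definition~\ref{+225} there is an open cover $\{\wh U_\zeta\}$ of $\wh M$ together with $\wh G$-equivariant trivialisations $\wh P|_{\wh U_\zeta}\cong \wh U_\zeta\times \wh G$, and the hypothesis that $\wh G\to \wh G/\wh H$ is a principal superbundle yields induced trivialisations $\wh P/\wh H|_{\wh U_\zeta}\cong \wh U_\zeta\times(\wh G/\wh H)$ compatible with the projection, which is then locally identified with $\Id_{\wh U_\zeta}\times(\wh G\to \wh G/\wh H)$ and hence is $\wh H$-principal.

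Given a global section $\wh h:\wh M\to \wh P/\wh H$, I would define the candidate supersubbundle $\wh P^{h}$ as the fibre product $\wh M\times_{\wh P/\wh H}\wh P$ in the category of $G$-supermanifolds. Local existence is immediate: on each $\wh U_\zeta$ the section is a $G$-morphism $\wh U_\zeta\to \wh G/\wh H$, and the fibre product is realised by the pullback of the $\wh H$-principal superbundle $\wh G\to \wh G/\wh H$, giving the local model $\wh U_\zeta\times \wh H$. Gluing by the inherited $\wh H$-invariant transition cocycle, $\wh P^{h}$ becomes an $\wh H$-principal supersubbundle of $\wh P$ over $\wh M$ in the sense of Definition~\ref{+225}.

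In the opposite direction, given an $\wh H$-principal supersubbundle $\wh\iota:\wh Q\hookrightarrow \wh P$ over $\wh M$, the composition $\wh\pi_H\circ\wh\iota:\wh Q\to \wh P/\wh H$ is $\wh H$-invariant, because over each point of $\wh M$ the image of $\wh\iota$ is a single $\wh H$-orbit that is mapped to one point of $\wh P/\wh H$. Applying the universal property of Definition~\ref{+224} to the quotient $\wh Q\to \wh Q/\wh H=\wh M$ then provides a unique $G$-morphism $\wh h_{Q}:\wh M\to \wh P/\wh H$ through which $\wh\pi_H\circ\wh\iota$ factors, and the compatibility of $\wh\iota$ with the projections to $\wh M$ forces the composition of $\wh h_{Q}$ with $\wh P/\wh H\to \wh M$ to be $\Id_{\wh M}$, so that $\wh h_{Q}$ is indeed a section.

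Finally I would verify that the two assignments are mutually inverse by working on the trivialising cover $\{\wh U_\zeta\}$, where both correspondences reduce to the tautological identification between sections of $\wh U_\zeta\times(\wh G/\wh H)\to \wh U_\zeta$ and $\wh H$-reductions of $\wh U_\zeta\times \wh G\to \wh U_\zeta$. The main obstacle is the fibre-product step: unlike in the smooth category, existence of pullbacks and locally trivial quotients in the category of $G$-supermanifolds is not formal but must be justified at the level of structure sheaves via the exact sequence (\ref{+223}), using the hypothesis that $\wh G\to \wh G/\wh H$ is a principal superbundle to obtain the local trivialisations; a subsidiary subtlety is to check that the sheaf-level $\wh H$-invariance encoded in the isomorphism (\ref{+221}) is preserved under both assignments.
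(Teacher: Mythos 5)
Your proposal is correct and follows essentially the same route as the paper: both reduce the theorem to showing that $\wh P\ar \wh P/\wh H\ar \wh M$ is a composite superbundle, using the hypothesis on $\wh G\to\wh G/\wh H$ together with the trivializations of $\wh P\to\wh M$ (and of the $\wh P$-associated bundle with fibre $\wh G/\wh H$) to exhibit $\wh P\to\wh P/\wh H$ as an $\wh H$-principal superbundle, and then obtain the bijection exactly as in Theorem \ref{g00}, with your fibre product $\wh M\times_{\wh P/\wh H}\wh P$ being the paper's inverse image $\wh\pi_H^{-1}(\wh h(\wh M))$.
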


In order to proof Theorem \ref{g20}, it suffices to show that the
morphisms
\mar{g21}\beq
\wh P\ar \wh P/\wh H\ar \wh M \label{g21}
\eeq
form a composite $G$-superbundle. A key point is that underlying
spaces of $G$-supermanifolds are smooth real manifolds, but
possessing very particular transition functions and morphisms.
Therefore, the condition of local triviality of the quotient $\wh
G\to \wh G/\wh H$ is rather strong. However, it is satisfied in
the most interesting case for applications when $\wh G$ is a
supermatrix group and $\wh H$ is its Cartan supersubgroup. For
instance, let $\wh P=L\wh M$ be a principal superbundle of graded
frames in the tangent superspaces over a supermanifold $\wh M$ of
even-odd dimension $(n,2m)$. If its structure general linear
supergroup $\wh G=\wh{GL}(n|2m; \La)$ is reduced to the
orthgonal-symplectic supersubgroup $\wh H=\wh{OS}p(n|m;\La)$, one
can think of the corresponding global section of the quotient
bundle $L\wh M/\wh H\to \wh M$ as being a supermetric on $\wh M$.
Note that a Riemannian supermetric on graded manifolds has been
considered in a different way \cite{zirn}.

\begin{proof}
Let $\wh\pi:\wh P\to \wh P/\wh G$ be a principal superbundle with
a structure $G$-Lie group $\wh G$. Let $\wh i:\wh H\to \wh G$ be a
closed $G$-Lie supersubgroup of $\wh G$, i.e., $i: H\to G$ is a
closed Lie subgroup of the Lie group $G$. Since $H$ is a closed
subgroup of $G$, the latter is an $H$-principal fiber bundle $G\to
G/H$ \cite{ste}. However, $G/H$ need not possesses a
$G$-supermanifold structure. Let us assume that the action
\be
\wh\rho:\wh G\times \wh H\ar^{\id\times \wh i}\wh G\times \wh
G\ar^{\wh m}\wh G
\ee
of $\wh H$ on $\wh G$ by right multiplications defines the
quotient
\mar{g80}\beq
\wh\zeta:\wh G\to \wh G/\wh H \label{g80}
\eeq
which is a principal superbundle with the structure $G$-Lie
supergroup $\wh H$. In this case, the $G$-Lie supergroup $\wh G$
acts on the quotient supermanifold $\wh G/\wh H$ on the left by
the law
\be
\wh\varrho: \wh G\times \wh G/\wh H=\wh G\times \wh\zeta(\wh G)\to
(\wh\zeta\circ\wh m)(\wh G\times \wh G).
\ee
Given this action of $\wh G$ on $\wh G/\wh H$, we have a $\wh
P$-associated superbundle
\mar{g87}\beq
\wh\Si=(\wh P\times \wh G/\wh H)/\wh G \ar^{\wh\pi_\Si} \wh M
\label{g87}
\eeq
with the typical fiber $\wh G/\wh H$. Since
\be
\wh P/\wh H=((\wh P\times \wh G)/\wh G)/\wh H= (\wh P\times \wh
G/\wh H)/\wh G,
\ee
the superbundle $\wh\Si$ (\ref{g87}) is the quotient $(\wh P/\wh
H, \wh\pi_H)$ of $\wh P$ with respect to the right action
\be
\wh\rho\circ (\id\times \wh i):\wh P\times \wh H\ar \wh P\times
\wh G\ar \wh P
\ee
of the $G$-Lie supergroup $\wh H$. Let us show that this quotient
$\wh\pi_H:\wh P\to \wh P/\wh H$ is a principal superbundle with
the structure supergroup $\wh H$. Note that, by virtue of the
well-known theorem \cite{ste}, the underlying space $P$ of $\wh P$
is an $H$-principal bundle $\pi_H: P\to P/H$. Let
$\{V_\kappa,\wh\Psi_\kappa\}$ be an atlas of trivializations
\be
\wh\Psi_\kappa : (\zeta^{-1}(V_\kappa),
\ccG_G|_{\zeta^{-1}(V_\kappa)})\to
(V_\kappa,\ccG_{G/H}|_{V_\kappa})\times \wh H,
\ee
of the $\wh H$-principal bundle $\wh G\to \wh G/\wh H$, and let
$\{U_\al,\wh\psi_\al\}$ be an atlas of trivializations
\be
\wh\psi_\al: (\pi^{-1}(U_\al),\ccG_P|_{\pi^{-1}(U_\al)})\to
(U_\al,\ccG_M|_{U_\al})\times \wh G
\ee
of the $\wh G$-principal superbundle $\wh P\to \wh M$. Then we
have the $G$-isomorphisms
\mar{g85}\ben
&& \wh\psi_{\al\kappa}=(\id \times \wh\Psi_\kappa)\circ
\wh\psi_\al: (\psi^{-1}_\al(U_\al\times\zeta^{-1}(V_\kappa)),
\ccG_P|_{\psi^{-1}_\al(U_\al\times\zeta^{-1}(V_\kappa))})\to  \label{g85}\\
&& \qquad (U_\al,\ccG_M|_{U_\al})\times
(V_\kappa,\ccG_{G/H}|_{V_\kappa}) \times \wh H=(U_\al\times
V_\kappa, \ccG_M|_{U_\al}\wh\ot \ccG_{G/H}|_{V_\kappa})\times \wh
H. \nonumber
\een
For any $U_\al$, there exists a well-defined morphism
\be
&& \wh\Psi_\al:(\pi^{-1}(U_\al), \ccG_P|_{U_\al}) \to (U_\al\times
G/H, \ccG_M|_{U_\al}\wh\ot \ccG_{G/H})\times \wh H=\\
&& \qquad (U_\al,\ccG_M|_{U_\al})\times \wh G/\wh H\times\wh H
\ee
such that
\be
\wh \Psi_\al|_{\psi^{-1}_\al(U_\al\times\zeta^{-1}(V_\kappa))}
=\wh \psi_{\al\kappa}.
\ee
Let $\{U_\al,\wh\vf_\al\}$ be an atlas of trivializations
\be
\wh\vf_\al:
(\pi_\Si^{-1}(U_\al),\ccG_\Si|_{\pi^{-1}_\Si(U_\al)})\to
(U_\al,\ccG_M|_{U_\al})\times \wh G/\wh H
\ee
of the $\wh P$-associated superbundle $\wh P/\wh H\to \wh M$. Then
the morphisms
\be
(\wh\vf_\al^{-1}\times \id)\circ \wh\Psi_\al: (\pi^{-1}(U_\al),
\ccG_P|_{U_\al})\to (\pi_\Si^{-1}
(U_\al),\ccG_\Si|_{\pi^{-1}_\Si(U_\al)})\times\wh H
\ee
make up an atlas $\{\pi^{-1}_\Si(U_\al), (\wh\vf_\al^{-1}\times
\id)\circ \wh\Psi_\al\}$ of trivializations of the  $\wh
H$-principal superbundle $\wh P\to \wh P/\wh H$. As a consequence,
we obtain the composite superbundle (\ref{g21}). Now, let $\wh i_h
: \wh P_h\to \wh P$ be an $\wh H$-principal supersubbundle of the
principal superbundle $\wh P\to\wh M$. Then there exists a global
section $\wh h$ of the superbundle $\wh \Si\to \wh M$ such that
the image of $\wh P_h$ with respect to the morphism $\wh
\pi_H\circ\wh i_h$ coincides with the range of the section $\wh
h$. Conversely, given a global section $\wh h$ of the superbundle
$\wh \Si\to \wh M$, the inverse image $\wh\pi_H^{-1}(\wh h(\wh
M))$ is an $\wh H$-principal supersubbundle of $\wh P\to\wh M$.
\end{proof}

Let us show that, as was mentioned above, the condition of Theorem
\ref{g20} hold if $\wh H$ is the Cartan supersubgroup of a
supermatrix group $\wh G$, i.e., $\wh G$ is a $G$-Lie
supersubgroup of some general linear supergroup
$\wh{GL}(n|m;\La)$.

Recall that a Lie superalgebra $\wh{\got g}$ of an
$(n,m)$-dimensional $G$-Lie supergroup $\wh G$ is defined as a
$\La$-algebra of left-invariant supervector fields on $\wh G$,
i.e., derivations of its structure sheaf $\ccG_G$. A supervector
field $u$ is called left-invariant if
\be
(\id\ot u)\circ \wh m^* =\wh m^*\circ u.
\ee
Left-invariant supervector fields on $\wh G$ make up a Lie
$\La$-superalgebra. Being a superspace $B^{n|m}$, a Lie
superalgebra is provided with a structure of the standard
$G$-supermanifold $\wh B^{n+m,n+m}$. Its even part $\wh{\got
g}_0=\wh B^{n,m}$ is a Lie $\La_0$-algebra.

Let $\wh G$ be a matrix $G$-Lie supergroup. Then there is an
exponential map
\be
\xi(J)=\exp(J)=\op\sum_k \frac{1}{k'}J^k
\ee
of some open neighbourhood of the origin of the Lie algebra
$\wh{\got g}_0$ onto an open neighbourhood $U$ of the unit of $\wh
G$. This map is an $H^\infty$-morphism, which is trivially
extended to a $G$-morphism.

Let $\wh H$ be a Cartan supersubgroup of $\wh G$, i.e., the even
part $\wh{\got h}_0$ of the Lie superalgebra $\wh{\got h}$ of $wh
H$ is a Cartan subalgebra of the Lie algebra $\wh\ccG_0$, i.e.,
\be
\wh\ccG_0=\wh{\got f}_0 +\wh{\got h}_0, \qquad [\wh{\got
f}_0,\wh{\got f}_0]\subset \wh{\got h}_0, \qquad [\wh{\got
f}_0,\wh{\got h}_0]\subset \wh{\got f}_0.
\ee
Then there exists an open neighbourhood, say again $\wh U$, of the
unit of $\wh G$ such that any element $g$ of $\wh U$ is uniquely
brought into the form
\be
g=\exp(F)\exp(I), \qquad F\in \wh{\got f}_0, \qquad I\in \wh{\got
h}_0.
\ee
Then the open set $\wh U_H=\wh m(\wh U\times \wh H)$ is
$G$-isomorphic to the direct product $\xi(\xi^{-1}(U)\cap\wh{\got
f}_0)\times\wh H$. This product provides a trivialization of an
open neighbourhood of the unit of $\wh G$. Acting on this
trivialization by left translations $\wh L_g$, $g\in\wh G$, one
obtains an atlas of a principal superbundle $\wh G\to\wh H$.

For instance, let us consider a superspace $B^{n|2m}$, coordinated
by $(x^a,y^i,\ol y^i)$, and the general linear supergroup
$\wh{GL}(n|2m;\La)$ of its automorphisms. Let $B^{n|2m}$ be
provided with the $\La$-valued bilinear form
\mar{g90}\beq
\om=\op\sum_{i=1}^n (x^i x'^i) + \op\sum_{j=1}^m (y^j\ol y'^j- \ol
y^j y'^j). \label{g90}
\eeq
The supermatrices (\ref{+200}) preserving this bilinear form make
up the orthogonal-symplectic supergroup $\wh{OS}p(n|m;\La)$
\cite{fuks}. It is a Cartan subgroup of $\wh{GL}(n|2m;\La)$. Then
one can think of the quotient
$\wh{GL}(n|2m;\La)/\wh{OS}p(n|m;\La)$ as being a supermanifold of
$\La$-valued bilinear forms on $B^{n|2m}$ which are brought into
the form (\ref{g90}) by general linear supertransformations.

Let $\wh M$ be $G$-supermanifold of dimension $(n,2m)$ and $T\wh
M$ its tangent superbundle. Let $L\wh M$ be an associated
principal superbundle. Let us assume that its structure supergroup
$\wh{GL}(n|2m;\La)$ is reduced to the supersubgroup
$\wh{OS}p(n|m;\La)$. Then by virtue of Theorem \ref{g20}, there
exists a global section $h$ of the quotient $L\wh
M/\wh{OS}p(n|m;\La)\to \wh M$ which can be regarded as a
supermetric on a supermanifold $\wh M$.

Note that, bearing in mind physical applications, one can treat
the bilinear form (\ref{g90}) as {\it sui generis} superextension
of the Euclidean metric on the body $\Bbb R^n=\si(B^{n|m})$ of the
superspace $B^{n|m}$. However, the body of a supermanifold is
ill-defined in general \cite{caten}.

\section{Graded principal bundles}

Graded principal bundles and connections on these bundles  can be
studied similarly to principal superbundles and principal
superconnections, though the theory of graded principal bundles
preceded that of principal superbundles \cite{alm,kost77}.
Therefore, we will touch on only a few elements of the graded
bundle technique (see, e.g. \cite{stavr} for a detailed
exposition).

Let $(Z,\gA)$ be a graded manifold of dimension $(n,m)$. A useful
object in the graded manifold theory, not mentioned above, is the
 finite dual $\gA(Z)^\circ$ of the algebra $\gA(Z)$ which
consists of elements $a$ of the dual $\gA(Z)^*$ vanishing on an
ideal of $\gA(Z)$ of finite codimension. This is a graded
commutative coalgebra with the comultiplication
\be
(\Delta^\circ(a))(f\ot f')=a(ff'), \qquad  f,f'\in \gA(Z),
\ee
and the counit
\be
\e^\circ(a)= a(1_\gA).
\ee
In particular, $\gA(Z)^\circ$ includes the evaluation elements
$\dl_z$ such that
\be
\dl_z(f)=(\si(f))(z).
\ee
Given an evaluation element $\dl_z$, elements $u\in\gA(Z)^\circ$
are called primitive elements with respect to $\dl_z$ if they obey
the relation
\mar{+234}\beq
\Delta^\circ(v)= u\ot\dl_z +\dl_z\ot u. \label{+234}
\eeq
These elements are derivations of $\gA(Z)$ at $z$, i.e.,
\be
u(ff')=(uf)(\dl_zf') +(-1)^{[u][f]}(\dl_zf)(uf').
\ee

\begin{definition}
A graded Lie group $(G,\ccG)$ is defined as a graded manifold such
that $G$ is an ordinary Lie group, the algebra $\ccG(G)$ is a
graded Hopf algebra $(\Delta, \e, S)$, and the algebra epimorphism
$\si:\ccG(G)\to C^\infty(G)$ is a morphism of graded Hopf
algebras.
\end{definition}

  One can show that $\ccG(G)^\circ$ is also equipped
with the structure of a Hopf algebra with the multiplication law
\mar{+236}\beq
a\star b=(a\ot b)\circ\Delta, \qquad  a,b\in\ccG(G)^\circ.
\label{+236}
\eeq
With respect to this multiplication, the evaluation elements
$\dl_g$, $\ccG\in G$, constitute a group
$\dl_g\star\dl_{g'}=\dl_{gg'}$ isomorphic to $G$. Therefore, they
are also called group-like elements. It is readily observed that
the set of primitive elements of $\ccG(G)^\circ$ with respect to
$\dl_e$, i.e., the tangent space $T_e(G,\ccG)$ is a  Lie
superalgebra with respect to the multiplication (\ref{+236}). It
is called the Lie superalgebra ${\got g}$ of the graded Lie group
$(G,\ccG)$.

One says that a graded Lie group $(G,\ccG)$ acts on a graded
manifold $(Z,\gA)$ on the right if there exists a morphism
\be
(\vf,\Phi):(Z,\gA)\times(G,\ccG)\to (Z,\gA)
\ee
such that the corresponding algebra morphism
\be
\Phi:\gA(Z)\to \gA(Z)\ot \ccG(G)
\ee
defines a structure of a right $\ccG(G)$-comodule on $\gA(Z)$,
i.e.,
\be
(\Id\ot \Delta)\circ\Phi=(\Phi\ot\Id)\circ\Phi, \qquad
(\Id\ot\e)\circ\Phi=\Id.
\ee
For a right action $(\vf,\Phi)$ and for each element
$a\in\ccG(G)^\circ$, one can introduce the linear map
\mar{+235}\beq
  \Phi_a=(\Id\ot a)\circ\Phi: \gA(Z)\to\gA(Z). \label{+235}
\eeq
In particular, if $a$ is a primitive element with respect to
$\dl_e$, then $\Phi_a\in\gd \gA(Z)$.

Let us consider a right action of $(G,\ccG)$ on itself.  If
$\Phi=\Delta$ and $a=\dl_g$ is a group-like element, then $\Phi_a$
(\ref{+235}) is a homogeneous graded algebra isomorphism of degree
zero which  corresponds to the right translation $G\to Gg$. If
$a\in{\got g}$, then $\Phi_a$ is a derivation of $\ccG(G)$. Given
a basis $\{u_i\}$ for ${\got g}$, the derivations $\Phi_{u_i}$
constitute the global basis for $\gd\ccG(G)$, i.e., $\gd\ccG(G)$
is a free left $\ccG(G)$-module. In particular, there is the
decomposition
\be
&&\ccG(G)=\ccG'(G)\oplus_R\ccG''(G),\\
&& \ccG'(G)=\{ f\in\ccG(G)\, :\, \Phi_u(f)=0, \,\,  u\in {\got
g}_0\},\\
&& \ccG''(G)=\{ f\in\ccG(G)\, :\, \Phi_u(f)=0, \,\,  u\in {\got
g}_1\}.
\ee
Since $\ccG'(G)\cong C^\infty(G)$, one finds that every graded Lie
group $(G,\ccG)$ is the sheaf of sections of some trivial exterior
bundle $G\times {\got g}_1^*\to G$ \cite{alm,boy,kost77}.

Let us turn now to the notion of a graded principal bundle. A
right action $(\vf,\Phi)$ of $(G,\ccG)$ on $(Z,\gA)$ is called
free if, for each $z\in Z$, the morphism
\be
\Phi_z:\gA(Z)\to \ccG(G)
\ee
is such that the dual morphism
\be
\Phi_{z*}:\ccG(G)^\circ \to \gA(Z)^\circ
\ee
is injective.

A right action $(\vf,\Phi)$ of $(G,\ccG)$ on $(Z,\gA)$ is called
regular if the morphism
\be
(\vf\times\pr_1)\circ\Delta: (Z,\gA)\times (G,\ccG)\to
(Z,\gA)\times (Z,\gA)
\ee
defines a closed graded submanifold of $(Z,\gA)\times (Z,\gA)$.

\begin{remark}
Let us note that $(Z',\gA')$ is said to be a graded submanifold of
$(Z,\gA)$ if there exists a morphism $(Z',\gA')\to (Z,\gA)$ such
that the corresponding morphism $\gA'(Z')^\circ\to \gA(Z)^\circ$
is an inclusion. A graded submanifold is called closed if ${\rm
dim}\,(Z',\gA')<{\rm dim}\,(Z,\gA)$.
\end{remark}

Then we come to the following variant of the well-known theorem on
the quotient of a graded manifold \cite{alm,stavr}.

\begin{theorem}
A right action $(\vf,\Phi)$ of $(G,\ccG)$ on $(Z,\gA)$ is regular
if and only if the quotient $(Z/G,\gA/\ccG)$ is a graded manifold,
i.e., there exists an epimorphism of graded manifolds $(Z,\gA)\to
(Z/G,\gA/\ccG)$ compatible with the surjection $Z\to Z/G$.
\end{theorem}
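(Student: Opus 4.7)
The plan is to imitate the classical Godement-type criterion for smooth quotients, adapted to the graded category by working simultaneously with the underlying smooth structures on $Z$ and $G$ and with the sheaves $\gA$ and $\ccG$. Throughout I will use the finite-dual/comodule formalism introduced just above, so that invariance of a graded function $f\in\gA(U)$ under $(\vf,\Phi)$ means $\Phi(f)=f\ot 1_{\ccG(G)}$ after suitable restriction.

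For the necessity direction, suppose $(Z/G,\gA/\ccG)$ exists as a graded manifold together with an epimorphism $\wh\pi:(Z,\gA)\to(Z/G,\gA/\ccG)$ compatible with $\pi:Z\to Z/G$. Then the fibre product $(Z,\gA)\times_{(Z/G,\gA/\ccG)}(Z,\gA)$ is a closed graded submanifold of $(Z,\gA)\times(Z,\gA)$ because $\wh\pi$ is, by construction, a graded submersion. The morphism $(\vf\times\pr_1)\circ\Delta$ factors through this fibre product, and freeness of the action (its underlying smooth action has graph equal to $Z\times_{Z/G}Z$) forces the factorization to be a closed embedding, giving regularity.

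For sufficiency, assume the action is regular. First I would reduce the underlying smooth picture: the regularity hypothesis implies, by the classical Godement theorem applied to $\vf:Z\times G\to Z$, that $Z/G$ carries a unique smooth manifold structure for which $\pi:Z\to Z/G$ is a surjective submersion, and in fact a locally trivial principal $G$-bundle since the action is free. Next I would define the candidate structure sheaf: for an open $U\subset Z/G$, set
\[
(\gA/\ccG)(U)\;=\;\{f\in\gA(\pi^{-1}(U))\,:\,\Phi(f)=f\ot 1_{\ccG(G)}\},
\]
so that $\gA/\ccG$ is a subsheaf of graded commutative algebras of $\pi_*\gA$, with the evident morphism $\wh\pi^*:\gA/\ccG\to\pi_*\gA$. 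It remains to verify that $(Z/G,\gA/\ccG)$ satisfies the two axioms (\ref{cmp140})--(ii) defining a graded manifold, and that $\wh\pi$ is a graded epimorphism.

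The hard part will be producing graded splitting domains for $\gA/\ccG$. My plan is to work locally around a point $z\in Z$ with image $\ol z\in Z/G$: choose a smooth slice $S\subset Z$ through $z$ for the free regular $G$-action, so that a neighborhood of $z$ is smoothly diffeomorphic to $S\times G$. Applying Batchelor's Theorem \ref{lmp1} to a $G$-invariant splitting domain $U'$, one trivializes $\gA|_{U'}\cong C^\infty_{U'}\ot\w V^*$ where $V$ carries the linearized action coming from the restriction of $\Phi$ to the primitive elements of $\ccG(G)^\circ$, i.e., to the Lie superalgebra ${\got g}$. Decomposing $V^*=V^*_{\rm inv}\oplus V^*_{\rm tr}$ into ${\got g}_0$-invariants and a ${\got g}$-transverse complement, and using primitivity (\ref{+234}) to convert the $\ccG$-comodule condition on generators into a derivation condition, the invariant subring is generated over $C^\infty_{Z/G}|_{\pi(S)}$ by $\w V^*_{\rm inv}$, yielding a local isomorphism
\[
(\gA/\ccG)(\pi(S))\;\cong\;C^\infty(\pi(S))\ot\w V^*_{\rm inv}.
\]
This is the local splitting (\ref{+54}) required in the definition of a graded manifold, whence (i) and (ii) of (\ref{cmp140}) follow and $\wh\pi$ becomes a graded epimorphism. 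The key obstacle, and the only place the regularity hypothesis is used essentially, is this graded slice construction: one must control the odd directions simultaneously with the even ones, ensuring that the ${\got g}$-module decomposition of $V^*$ is globally consistent along the orbit and compatible with Batchelor gluing, precisely the graded analogue of the closedness of the graph needed in the Godement theorem.
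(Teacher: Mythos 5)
Note first that the paper itself offers no proof of this statement: it is quoted as a ``variant of the well-known theorem on the quotient of a graded manifold'' with references to the literature, so there is nothing in the text to compare your argument against line by line. Judged on its own, your proposal is a plan rather than a proof, and the plan has a genuine gap exactly at the point you yourself flag. The whole content of the sufficiency direction is the graded slice/local triviality statement: that regularity of $(\vf,\Phi)$ yields, around each orbit, a local isomorphism of $(Z,\gA)$ with a product of a candidate base chart and $(G,\ccG)$, so that the invariant subsheaf is locally of the form (\ref{+54}). Your sketch replaces this by a vector-space decomposition $V^*=V^*_{\rm inv}\oplus V^*_{\rm tr}$ of a Batchelor model into ${\got g}_0$-invariants and a complement, and asserts that the invariants are $C^\infty(\pi(S))\ot\w V^*_{\rm inv}$. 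This does not work as stated: invariance under the graded Lie group is the comodule condition with respect to all of $\ccG(G)$, and the odd part ${\got g}_1$ acts by odd derivations of type $\Phi_u$ which mix the generators $c^a$ with the even functions (cf.\ (\ref{hn14})--(\ref{cmp50a})), so the invariant subalgebra is in general not a sub-Grassmann algebra of the form $C^\infty\ot\w V^*_{\rm inv}$ attached to a pointwise decomposition of $V^*$; moreover a Batchelor trivialization compatible with the action is itself something to be constructed, not assumed. Also note that the expected odd dimension of the quotient drops by the odd dimension of $(G,\ccG)$, which your ``${\got g}_0$-invariants'' bookkeeping does not reflect. Since this is precisely where the regularity hypothesis must be converted into geometry, the proof is missing its core.

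There are two further defects. Both directions of your argument invoke freeness of the action (graph equal to the fibre product $Z\times_{Z/G}Z$, hence a closed embedding; $Z\to Z/G$ a principal bundle), but freeness is not a hypothesis of the theorem --- in this paper it is a separate condition imposed later in the definition of a graded principal bundle --- so either you must add it or argue without it. And in the necessity direction you treat $\wh\pi$ as a graded submersion and use a standard notion of closed embedding, whereas the theorem only supplies an epimorphism compatible with $Z\to Z/G$, and the paper's notion of a closed graded submanifold is the specific one phrased through inclusions of finite duals $\gA'(Z')^\circ\to\gA(Z)^\circ$ and a dimension inequality; your argument never connects to these definitions, so even the ``easy'' direction is not yet established in the framework the statement lives in.
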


In view of this Theorem,  a graded principal bundle $(P,\gA)$ can
be defined as a locally trivial submersion
\be
(P,\gA)\to (P/G,\gA/\ccG)
\ee
with respect to the right regular free action of $(G,\ccG)$ on
$(P,\gA)$.  In an equivalent way, one can say that a graded
principal bundle is a graded manifold  $(P,\gA)$ together with a
free right action of a graded Lie group $(G,\ccG)$ on  $(P,\gA)$
such that the quotient $(P/G,\gA/\ccG)$ is a graded manifold and
the natural surjection
\be
(P,\gA)\to (P/G,\gA/\ccG)
\ee
is a submersion. Obviously, $P\to P/G$ is an ordinary
$G$-principal bundle.

A graded principal connection on a graded $(G,\ccG)$-principal
bundle $(P,\gA)\to (X,{\got B})$ can be introduced similarly to
  a  superconnection on a principal superbundle. This is defined as a
$(G,\ccG)$-invariant splitting of the sheaf $\gd\gA$, and is
represented by a ${\got g}$-valued graded connection form on
$(P,\gA)$ \cite{stavr}.

\begin{remark}
In an alternative way, one can define graded connections on a
graded bundle $(Z,\gA)\to (X,{\got B})$ as sections $\G$ of the
jet graded bundle
\be
J^1(Z/X)\to (Z,\gA)
\ee
of sections of  $(Z,\gA)\to (X,{\got B})$ \cite{alm}, which is
also a graded manifold \cite{rup}. In the case of a
$(G,\ccG)$-principal graded bundle, these sections $\G$ are
required to be $(G,\ccG)$-equivariant.
\end{remark}

\end{document}